%% file: main.tex
\documentclass[conference]{IEEEtran}
\IEEEoverridecommandlockouts
\usepackage{float}
\usepackage{cite}
\usepackage{amsmath,amssymb,amsfonts}
\usepackage{amsthm}
\usepackage{algorithm}
\usepackage{algorithmic}
\usepackage{graphicx}
\usepackage{caption}
\usepackage{textcomp}
\usepackage{xcolor}
\usepackage{bm}
\usepackage{booktabs}
\usepackage{enumitem}
\usepackage[bookmarks]{hyperref}
\usepackage{xurl}

\def\BibTeX{{\rm B\kern-.05em{\sc i\kern-.025em b}\kern-.08em
    T\kern-.1667em\lower.7ex\hbox{E}\kern-.125emX}}

\newtheorem{definition}{Definition}
\newtheorem{proposition}{Proposition}
\newtheorem{remark}{Remark}

\newcommand{\R}{\mathbb{R}}
\DeclareMathOperator{\SD}{SD}

\input{paper_style.tex}

\begin{document}

\title{From the Superderivative to a Relative-Gradient Edge Operator}

\author{
    \IEEEauthorblockN{Alejandro Garnung Men\'{e}ndez, M.Sc., PhD Candidate\textsuperscript{1}}
    \IEEEauthorblockA{
        \textsuperscript{1}\textit{University of Oviedo, Gij\'{o}n Polytechnic School of Engineering, Asturias, Espa\~{n}a} \\
        \texttt{uo269564@uniovi.es}, \texttt{garnungalejandro@gmail.com}
    }
}

\maketitle

\begin{abstract}
First-derivative edge detectors such as Sobel, Prewitt, Roberts and Scharr measure absolute grey-level differences. We revisit the log-gradient magnitude, a relative-gradient edge operator, and motivate it by analogy with the superderivative, a multiplicative derivative in which the independent variable is perturbed multiplicatively rather than additively. In one and two dimensions the logarithm of the superderivative is a radial logarithmic derivative; its local, origin-free, isotropic analogue is the magnitude of the logarithmic gradient. The resulting operator measures relative rather than absolute change, giving an edge-strength map that is invariant under global intensity rescaling. We present the theory, show how the operator can be implemented in two steps---a classical gradient kernel plus a pointwise division---and compare it with standard first-derivative filters on synthetic images with ground-truth edges and on the official BSDS500 benchmark. We also study a parametric family that interpolates between the absolute gradient and the log-gradient; empirically, the best members are small intermediate exponents, which edge the classical kernels by a small but statistically detectable margin in the per-image metric and, with the same Gaussian smoothing as the single-$\sigma$ Canny, match it in ODS and lead it in OIS and AP, while the tuned Canny remains ahead in ODS. The single-threshold margin over the kernels does not replicate on the held-out test split, and the fully relative endpoint is never the best member, its ranking depending on the edge-map normalisation. The contribution is the construction and the invariance it buys, not a new state of the art.
\end{abstract}

\begin{IEEEkeywords}
superderivative, edge detection, first-derivative filters, relative gradient,
log-gradient, Sobel.
\end{IEEEkeywords}

\section{Introduction}
\label{sec:intro}
\IEEEPARstart{E}{dge} detection is one of the oldest and most studied tasks
in image processing.  Classical first-derivative operators --- Sobel,
Prewitt, Roberts, Scharr, and many others --- estimate the spatial gradient
of the image intensity and declare edges where the gradient magnitude is
large.  All of them share the same assumption: a grey-level jump of a given
size is equally important anywhere in the image.

That assumption is questionable.  A jump of ten grey levels on a background
of intensity five is visually much stronger than the same jump on a
background of intensity two hundred.  The human visual system responds
approximately logarithmically to light\cite{fechner}, so the relevant quantity
is the \emph{relative} jump, not the absolute one.  Several authors have
incorporated this idea into tone mapping, enhancement and denoising, usually
by working on the logarithm of the image.  Homomorphic filtering separates
illumination from reflectance by taking logarithms before filtering and
exponentiating afterwards --- a multiplicative image model $I=R\cdot L$ is
turned into an additive one by $\log I=\log R+\log L$, filtered, and mapped
back\cite{homomorphic}; Retinex theory builds lightness
from ratios of neighbouring intensities, so that the recovered quantity is a
relative rather than an absolute one\cite{land1971,land1977}, and its later
variational and centre/surround formulations keep that
property\cite{jobson1997,kimmel2003}; tone-mapping operators compress
dynamic range on a logarithmic or otherwise relative
scale\cite{reinhard,drago2003}.  All of these act on the image before a derivative is
taken.  The operator proposed here puts the same relative-change principle
into the derivative itself.

The \emph{superderivative} is the derivative obtained by perturbing the
independent variable multiplicatively rather than
additively~\cite{zotkin}; in multiplicative (non-Newtonian) calculus the same
construction is the bigeometric derivative~\cite{grossman}.  Whereas the
ordinary derivative adds a small increment $h$ to the argument, $x\mapsto x+h$,
the superderivative multiplies the argument by a factor close to one,
$x\mapsto x\cdot(1+h)$.  The resulting quantity therefore measures the
\emph{relative} growth of $f$ with respect to a proportional change of its
input, rather than the absolute growth with respect to an additive change.
The ordinary derivative admits other generalisations as well: fractional calculus
extends it to arbitrary real order through the Riemann--Liouville, Caputo and
Gr\"unwald--Letnikov operators~\cite{samko1993,kilbas2006,miller1993}, whereas
the superderivative changes the perturbation itself. The two generalisations are
independent and should not be confused: fractional calculus keeps the additive
perturbation $x\mapsto x+h$ and makes the \emph{order} real, while the
superderivative keeps the order at one and makes the \emph{perturbation}
multiplicative; the superderivative is therefore \emph{not} a derivative of
arbitrary order, and its logarithm is an ordinary first-order derivative of
$\log f$.
We use this relative-change principle to motivate a family of first-derivative
edge operators.  The simplest member of the family is the
log-gradient magnitude
\begin{equation}
E(x,y)=\|\nabla\log I(x,y)\|=\frac{\|\nabla I(x,y)\|}{I(x,y)},
\label{eq:loggrad}
\end{equation}
which replaces the absolute gradient with the relative gradient.  More
generally, the original two-dimensional superderivative score
\begin{equation}
L(x,y)=\frac{x\,\partial_x I(x,y)+y\,\partial_y I(x,y)}{I(x,y)}
\label{eq:L}
\end{equation}
measures relative change under a radial zoom centred at the origin.

Our contributions are:
\begin{itemize}[leftmargin=1.4em]
  \item A construction motivating the log-gradient edge map as the local,
        isotropic, origin-free analogue of the superderivative, together with the
        two-dimensional radial score it comes from.  The operator is motivated
        by, not derived from, the multiplicative derivative: the origin-dependent
        weights of the latter are what its isotropic form drops.
  \item A proof of invariance under global intensity rescaling.
  \item A parametric family $E_p=\|\nabla I\|/I^p$ that interpolates
        between the absolute gradient ($p=0$) and the log-gradient ($p=1$).
  \item A discussion of why the operator is not, in its full radial form, a
        pure convolution, and how it can nevertheless be evaluated in two
        efficient steps.
  \item Experimental comparisons with Sobel, Prewitt, Roberts, Scharr and
        Canny on synthetic ground-truth images, natural images, and BSDS500.
\end{itemize}

Applications beyond edge detection are outside the present scope.

\section{Background: classical first-derivative kernels}
\label{sec:kernels}
All classical first-derivative operators estimate the same quantity, the
spatial gradient, but use different finite-difference masks.  Table~\ref{tab:kernels}
lists the most common $3\times3$ (and $2\times2$) kernels\cite{gonzales}.

\begin{table*}[!t]
\centering
\footnotesize
\caption{Common first-derivative kernels.  $G_x$ estimates the horizontal
and $G_y$ the vertical derivative.  Roberts uses the two $2\times2$ masks
shown; the remaining operators define $G_y$ as the transpose of $G_x$.}
\label{tab:kernels}
\setlength{\tabcolsep}{3pt}
\begin{tabular}{@{}lll@{}}
\toprule
Operator & $G_x$ / masks & Reference \\
\midrule
Roberts &
$G_x=\begin{bmatrix} 1 & 0 \\ 0 & -1 \end{bmatrix}$,
$G_y=\begin{bmatrix} 0 & 1 \\ -1 & 0 \end{bmatrix}$ & \cite{roberts} \\
\addlinespace
Prewitt &
$\begin{bmatrix} -1 & 0 & 1 \\ -1 & 0 & 1 \\ -1 & 0 & 1 \end{bmatrix}$ & \cite{prewitt} \\
\addlinespace
Sobel &
$\begin{bmatrix} -1 & 0 & 1 \\ -2 & 0 & 2 \\ -1 & 0 & 1 \end{bmatrix}$ & \cite{sobel} \\
\addlinespace
Frei--Chen / Isotropic &
$\begin{bmatrix} -1 & 0 & 1 \\ -\sqrt{2} & 0 & \sqrt{2} \\ -1 & 0 & 1 \end{bmatrix}$ & \cite{freichen} \\
\addlinespace
Scharr &
$\begin{bmatrix} -3 & 0 & 3 \\ -10 & 0 & 10 \\ -3 & 0 & 3 \end{bmatrix}$ & \cite{scharr} \\
\bottomrule
\end{tabular}
\end{table*}

Several other designs exist.  The \emph{compass operators} of Kirsch
\cite{kirsch} and Robinson \cite{robinson} use a set of oriented
$3\times3$ masks (typically eight directions) and take the maximum response
as the edge strength.  The Farid--Simoncelli filters use larger, separable
$5\times5$ kernels that are optimally steerable and have better rotation
invariance\cite{farid,farid1997}.

The gradient magnitude is $E=\sqrt{(G_x*I)^2+(G_y*I)^2}$, where $*$ denotes
convolution.  The Canny detector adds Gaussian smoothing, non-maximum
suppression and hysteresis thresholding to a gradient estimate; it is a
pipeline rather than a single kernel\cite{canny}.  The scale-space framework
that underlies multiscale edge detection is developed by
Lindeberg~\cite{lindeberg1994}.  All of these methods treat
a given absolute grey-level jump as equally significant regardless of the
local mean.  Indeed, consider a one-dimensional step from a baseline $I_0$ to
$I_0+\Delta I$.  The derivative is the same for any $I_0$ because
$\partial_x(I_0+\Delta I\,H(x))=\Delta I\,\delta(x)$, where $H$ is the
Heaviside step and $\delta$ the Dirac delta.  The measured edge strength is
therefore proportional to $|\Delta I|$ and independent of $I_0$.  Equivalently, at a
fixed \emph{contrast ratio} $\Delta I/I_0$ the absolute response grows with the
background $I_0$, so a fixed threshold detects equal-contrast edges unequally; this
is the regime in which the relative operator of Section~\ref{sec:operator}
differs from the classical one, and it is tested directly in
Section~\ref{sec:relative}.

\section{The superderivative}
\label{sec:superderiv}
\subsection{One-dimensional definition}
For a positive function $f$, the superderivative is defined by~\cite{zotkin,grossman}
\begin{equation}
\SD(f(x))=\lim_{h\to0}\sqrt[h]{\frac{f(x\cdot(1+h))}{f(x)}}
=\exp\!\Big(x\,\frac{f'(x)}{f(x)}\Big).
\label{eq:sd}
\end{equation}
The limit follows from the first-order expansion
$f(x\cdot(1+h))=f(x+xh)\approx f(x)+xh f'(x)$, which gives
\begin{equation}
\frac{f(x\cdot(1+h))}{f(x)}\approx 1+xh\,\frac{f'(x)}{f(x)}.
\end{equation}
Write $a=x\,\frac{f'(x)}{f(x)}$ and assume $a\neq0$ (if $a=0$ the limit is
$1=e^0$).  Then
\begin{equation}
\Bigl(1+xh\,\frac{f'(x)}{f(x)}\Bigr)^{1/h}
=(1+ah)^{1/h}
=\Bigl[(1+ah)^{1/(ah)}\Bigr]^{a}.
\end{equation}
As $h\to0$ we have $ah\to0$, and the classical limit
$\lim_{t\to0}(1+t)^{1/t}=e$ gives $(1+ah)^{1/(ah)}\to e$.  Therefore the
whole expression tends to $e^{a}=\exp\!\bigl(xf'(x)/f(x)\bigr)$.
Taking logarithms (valid because $f>0$ implies $\SD(f)>0$) gives the more
convenient identity
\begin{equation}
\log\SD(f(x))=x\,\frac{f'(x)}{f(x)}=x\,\partial_x\log f(x).
\label{eq:logsd}
\end{equation}
The last equality is the chain rule: since $\partial_u\log u=1/u$,
$\partial_x\log f(x)=\frac{1}{f(x)}\cdot f'(x)=\frac{f'(x)}{f(x)}$.
For a power law $f(x)=x^k$ one obtains $\SD(f)=e^k$: the superderivative
extracts the exponent $k$ of homogeneous growth, i.e. the number $k$ for which
$f(cx)=c^k f(x)$ for every scaling factor $c>0$.  This is the sense in which
it measures relative change; the name \emph{superderivative} is used here in the
sense of multiplicative calculus~\cite{zotkin} and is unrelated to the
superderivative of supergeometry (a derivation on a superalgebra).

\subsection{Two-dimensional extension}
\begin{definition}[Spatial superderivative score]
Let $I:\Omega\to\R_{>0}$ be a positive image on a domain
$\Omega\subset\R^2$ centred at the origin.  The \emph{superderivative score}
is
\begin{equation}
L(x,y)=\frac{x\,\partial_x I(x,y)+y\,\partial_y I(x,y)}{I(x,y)},
\label{eq:L2}
\end{equation}
and the 2-D superderivative is $\SD(I)(x,y)=\exp(L(x,y))$.
\end{definition}

In polar coordinates, Eq.~\eqref{eq:L2} becomes $L=r\,\partial_r\log I$, the
radial logarithmic derivative.  The score is invariant under a global
intensity rescaling $I\mapsto cI$ and, after recentring, covariant under a
global spatial scaling: under $I(\cdot)\mapsto I(\lambda\,\cdot)$ it is
reparametrised as $L'(x)=L(\lambda x)$, so its value at corresponding
points is unchanged.

\subsection{From radial superderivative to isotropic log-gradient}
\label{sec:synthesis}
The one-dimensional superderivative Eq.~\eqref{eq:sd}~\cite{zotkin,grossman} measures relative growth
along the ray emanating from the origin: the two evaluation points $x$ and
$x(1+h)$ lie on the same ray, and the quotient $f(x(1+h))/f(x)$ is the relative
(multiplicative) change of $f$ when the argument is scaled by $1+h$ along that
ray.  Taking the $1/h$ power and the limit converts this ratio into an
instantaneous relative growth rate.  In two dimensions the same idea
gives the \emph{radial superderivative}
\begin{multline}
\SD_r(I)(r,\theta)
= \lim_{h\to0} \sqrt[h]{\frac{I(r\cdot(1+h),\theta)}{I(r,\theta)}} \\
= \exp\!\big(r\,\partial_r\log I(r,\theta)\big).
\label{eq:sdradial}
\end{multline}
Because $r\,\partial_r = x\,\partial_x + y\,\partial_y$, returning to
Cartesian coordinates yields exactly the spatial score
\begin{equation}
\log \SD_r(I)(x,y)
= \frac{x\,\partial_x I + y\,\partial_y I}{I}
= L(x,y),
\label{eq:Lcartesian}
\end{equation}
i.e. Definition~\eqref{eq:L2} is the direct two-dimensional transcription of
the original definition~\cite{zotkin,grossman}.

For edge detection the radial score has two shortcomings: it depends on the
choice of origin through the weights $x$ and $y$, and it measures change only
in the radial direction.  We therefore seek a local, origin-free and
direction-independent measure of the same relative-change quantity.

The logarithmic derivative in one dimension is
$\partial_x\log f = f'/f$.  Its natural two-dimensional extension is the
logarithmic gradient
\begin{equation}
\nabla\log I = \frac{\nabla I}{I}.
\label{eq:loggradvec}
\end{equation}
Taking the Euclidean norm gives
\begin{equation}
\|\nabla\log I\| = \frac{\|\nabla I\|}{I},
\label{eq:normloggrad}
\end{equation}
which measures the total relative rate of change of $I$ without privileging
any direction.  For a purely radial image $I=I(r)$ we have
$\nabla I=(\partial_r I)\,\mathbf{e}_r$, and therefore
\begin{equation}
\|\nabla\log I\| = |\partial_r\log I| = \frac{|L(x,y)|}{r}.
\label{eq:radialrelation}
\end{equation}
Thus the isotropic operator equals the radial logarithmic derivative
$|\partial_r\log I| = |L(x,y)|/r$ of the image intensity --- not
the radial superderivative itself, which is $\exp(L)$.  The isotropic operator
is therefore a natural local, origin-free analogue of the original radial
superderivative Eq.~\eqref{eq:sd}, though not a derived counterpart: the
superderivative is an exponential (multiplicative) object, so its exact isotropic
multiplicative analogue would be $\exp(\|\nabla\log I\|)$, whereas the isotropic
operator is the additive logarithmic gradient that underlies it.  We keep the
isotropic operator, written $E_{\mathrm{SD}}$ in Eq.~\eqref{eq:Esd}, because it is
the quantity that is directly comparable to the classical gradient magnitude, and we
use it as an analogue motivated by the same relative-change principle rather than as
a unique extension of the superderivative.  In one dimension the distinction is explicit: dropping the weight
$x$ from $\log\SD(f)=x\,f'/f$ leaves the logarithmic derivative
$\partial_x\log f=f'/f$, which is the logarithm of the \emph{geometric}
derivative $\exp(f'/f)$, not of the bigeometric one.  The isotropic operator is
thus a per-direction geometric log-derivative, and the superderivative contributes
the relative-change principle and the radial score rather than the exact operator.

\subsection{Why it is not a convolution}
A standard first-derivative filter such as Sobel is a convolution: the same
kernel is applied at every pixel.  The radial score $L$ cannot be expressed
as a convolution because the weights $x$ and $y$ depend on the global
coordinates.  However, as we show below, a local and isotropic operator
with the same relative-change semantics can be obtained by combining a
classical derivative kernel with a pointwise division.

\section{The superderivative edge operator}
\label{sec:operator}
\subsection{The isotropic log-gradient edge map}
Section~\ref{sec:synthesis} motivated the magnitude of the logarithmic
gradient as the local, origin-free, isotropic analogue of the
radial superderivative.  We
define the \emph{relative-gradient edge-strength map} $E_{\mathrm{SD}}$ as
\begin{equation}
E_{\mathrm{SD}}(x,y)=\|\nabla\log I(x,y)\|
=\frac{\sqrt{(\partial_x I)^2+(\partial_y I)^2}}{I}.
\label{eq:Esd}
\end{equation}
It is the classical gradient magnitude divided pointwise by the local
intensity.  It should not be confused with the canonical radial score $L$
of Eq.~\eqref{eq:L2}: $E_{\mathrm{SD}}$ is an isotropic operator inspired by
the same relative-change principle, while $L$ is the direct two-dimensional
transcription of the original superderivative.

\paragraph*{Relation to existing work}
The quantity $\|\nabla\log I\|$ is not new in itself, and it is worth saying
so plainly.  It is the magnitude of the logarithmic (or relative) gradient,
the differential form of the Weber contrast $\Delta I/I$ (the relative luminance
change, named after E.~H.~Weber, that the eye detects against a background
$I$)\cite{fechner}, and it
is what a gradient operator computes after the logarithmic transform used in
homomorphic filtering\cite{homomorphic}, in logarithmic image
processing\cite{jourlin}, and in tone mapping\cite{reinhard}.  Multiplicative
(non-Newtonian) calculus, in which the geometric derivative of a positive
function is $\exp(f'/f)$ and the bigeometric derivative is
$\exp(xf'/f)$~\cite{grossman,bashirov2008}, has been developed as a self-contained
theory and applied to image analysis by Florack and van Assen\cite{florack},
who use the multiplicative derivative in diffusion-tensor imaging and strain
analysis; they do not, however, propose the log-gradient as an edge measure.
Two further families of illumination- and contrast-robust detectors are related.
Phase congruency detects features where the Fourier components are maximally in
phase and is invariant to image contrast\cite{kovesi}; the ratio-of-averages
detector for synthetic-aperture radar images is designed for multiplicative
speckle noise\cite{touzi}.  Neither is a first-derivative kernel of the image
intensity: phase congruency normalises a frequency-domain local energy, and the
ratio-of-averages detector uses the ratio of local means over two regions rather
than a pointwise logarithmic derivative.  A systematic comparison with them is left
to future work.
What the present paper adds is not the operator but the route to it:
$E_{\mathrm{SD}}$ is motivated as the local, origin-free analogue of the
radial superderivative~\cite{zotkin,grossman} (Eq.~\eqref{eq:L2}), which fixes
the two defects of that score --- its dependence on the choice of origin and its
blindness to tangential change --- and the parametric family of
Section~\ref{sec:parametric}, which shows that the fully relative operator
$p=1$ is not the best member of its own family.

\begin{proposition}[Intensity invariance]
$E_{\mathrm{SD}}$ is invariant under a global intensity rescaling
$I\mapsto cI$ for any $c>0$.
\end{proposition}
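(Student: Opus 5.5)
The plan is a direct computation from the definition in Eq.~\eqref{eq:Esd}. Fix $c>0$ and set $\tilde I = cI$. Since $I>0$ on $\Omega$, also $\tilde I>0$, so $\log\tilde I$ is well defined and $E_{\mathrm{SD}}(\tilde I)$ makes sense. There are two equivalent routes, and we would present both briefly because they support each other.

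The first route works through the logarithm. Write $\log\tilde I = \log c + \log I$. The gradient of the constant $\log c$ vanishes, so $\nabla\log\tilde I=\nabla\log I$ pointwise. Taking Euclidean norms then gives $E_{\mathrm{SD}}(cI)=E_{\mathrm{SD}}(I)$ at every $(x,y)$.

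The second route works directly with the quotient form in Eq.~\eqref{eq:Esd}. Differentiation is linear, so $\partial_x\tilde I = c\,\partial_x I$ and $\partial_y\tilde I=c\,\partial_y I$. Therefore
\begin{equation*}
\sqrt{(\partial_x \tilde I)^2+(\partial_y \tilde I)^2}=|c|\,\|\nabla I\| = c\,\|\nabla I\|,
\end{equation*}
using $c>0$. Dividing by $\tilde I=cI$ cancels the factor $c$.

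No step is a real obstacle. The one point that needs care is the hypothesis $c>0$. It is needed to keep $\tilde I$ in the positive cone, so that the logarithm and the division are defined, and to drop the absolute value in $|c|$.

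We would also remark that the same cancellation holds for the discrete implementation. There the partial derivatives are replaced by any linear kernel from Table~\ref{tab:kernels}, and linearity of convolution gives $G*(cI)=c\,(G*I)$. The invariance is therefore exact for the two-step implementation as well, provided no additive regulariser such as $I+\varepsilon$ is used in the denominator.
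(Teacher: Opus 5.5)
Your proposal is correct, and your first route is exactly the paper's one-line proof: $\nabla\log(cI)=\nabla(\log c+\log I)=\nabla\log I$. The quotient-form computation and the remark on linear kernels are valid but not needed; the paper uses the same cancellation of $c$ in $\|\nabla(cI)\|/(cI)$ for the parametric family and the invariance experiment.
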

\begin{proof}
$\nabla\log(cI)=\nabla(\log c+\log I)=\nabla\log I$.
\end{proof}

\begin{proposition}[Positivity]
If $I>0$ everywhere, then $E_{\mathrm{SD}}\ge0$ everywhere.
\end{proposition}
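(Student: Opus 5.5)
The plan is to read the claim off the definition of $E_{\mathrm{SD}}$ in Eq.~\eqref{eq:Esd}: it is a quotient whose numerator is a Euclidean norm and whose denominator is the positive intensity. First I will fix the standing regularity assumption of Section~\ref{sec:operator}, namely that $I$ is differentiable on $\Omega$. This makes $\nabla I(x,y)$ a well-defined vector in $\R^2$ at every point. The hypothesis $I>0$ guarantees that $\log I$ is defined and that the identity $\nabla\log I=\nabla I/I$ of Eq.~\eqref{eq:loggradvec} holds by the chain rule recalled after Eq.~\eqref{eq:logsd}. Consequently $E_{\mathrm{SD}}$ is well defined everywhere, with no division by zero.

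Second, I will observe that the numerator satisfies $\sqrt{(\partial_x I)^2+(\partial_y I)^2}\ge 0$. This holds because it is the nonnegative square root of a sum of squares of real numbers, or equivalently because a Euclidean norm is nonnegative. The denominator satisfies $I(x,y)>0$ by hypothesis. A quotient of a nonnegative number by a strictly positive number is nonnegative, so $E_{\mathrm{SD}}(x,y)\ge0$ at every $(x,y)$. Alternatively, I will note directly that $E_{\mathrm{SD}}=\|\nabla\log I\|$ is a norm of a real vector and is therefore nonnegative. In this form positivity of $I$ is needed only so that the expression is defined.

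I expect no real obstacle. The only point needing care is the role of the hypothesis: $I>0$ is what makes the operator defined, rather than what makes the sign come out right. I will also remark that equality $E_{\mathrm{SD}}=0$ holds exactly where $\nabla I=0$, so the bound is sharp.
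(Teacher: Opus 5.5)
Your argument is correct and matches what the paper leaves implicit: the paper states the proposition without proof, since by Eq.~\eqref{eq:Esd} $E_{\mathrm{SD}}$ is a Euclidean norm divided by the strictly positive intensity. One small refinement to your closing remark: in the quotient form $\|\nabla I\|/I$ the hypothesis $I>0$ does fix the sign (where $I<0$ the quotient would be $\le 0$), so only in the form $\|\nabla\log I\|$ does it serve purely to make the expression defined.
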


\subsection{Relation to the radial score}
The radial score and the log-gradient are two facets of the same idea.  In
polar coordinates,
\begin{align}
L &= r\,\partial_r\log I, \\
\|\nabla\log I\| &= \sqrt{(\partial_r\log I)^2+(\partial_\theta\log I/r)^2}.
\end{align}
Thus $|L|/r=|\partial_r\log I|$ is the radial component of the log-gradient,
while $E_{\mathrm{SD}}$ is its full magnitude and therefore also captures the
angular component.  For images with
strong radial symmetry, $|L|$ or $|L|/r$ may be preferred; for general edge
detection we use $E_{\mathrm{SD}}$.  Figure~\ref{fig:radial} compares the
canonical radial score $L$, its normalised form $|L|/r$, and the isotropic
operator $E_{\mathrm{SD}}$ on two natural images, one radially structured about
its centre and one not.  On the retinal fundus the radial forms are adapted to
the symmetry and follow the vessel arcade; on the astronaut the isotropic
operator is more generally applicable.  At an intensity step the canonical
radial score $L$ gives a single signed response because it is a directed
derivative, whereas the magnitude forms $|L|/r$ and $E_{\mathrm{SD}}$ take
absolute values and, through the $1/I$ (or $1/r$) weight, emphasise the dark side
of the step; the result remains a single, generally asymmetric peak rather than a
double-lobed or second-derivative-like profile.

\begin{figure*}[!t]
\centering
\includegraphics[width=\textwidth]{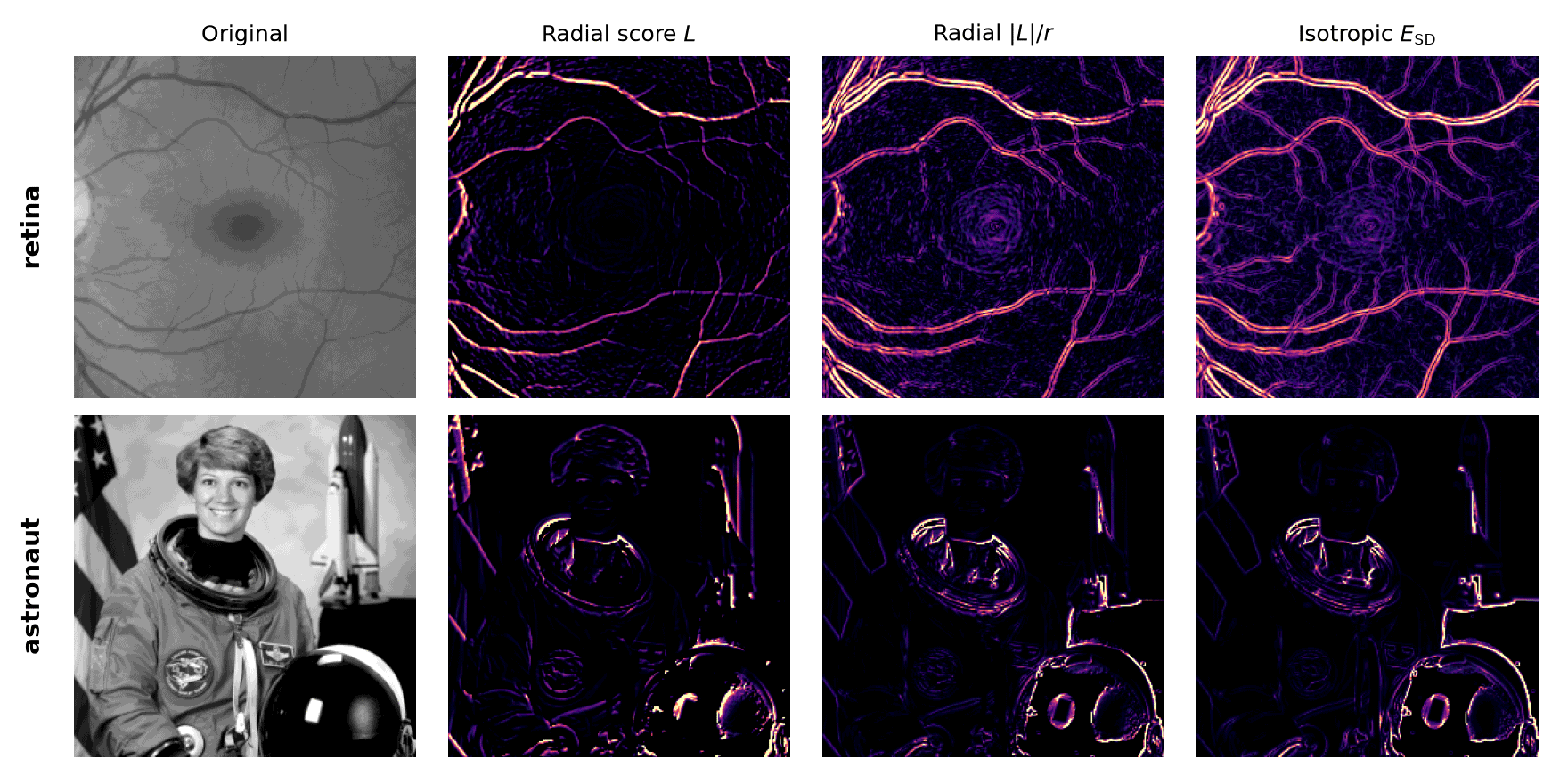}
\caption{Comparison of the canonical radial score $L=(xI_x+yI_y)/I$, its
normalised version $|L|/r$, and the isotropic superderivative
$E_{\mathrm{SD}}=\|\nabla I\|/I$ (Eq.~\eqref{eq:Esd}).  The retinal fundus is
radially structured about the image centre (radial illumination falloff, macula at
the centre, vessel arcade sweeping around it), which is the regime the radial forms
assume; the astronaut has no such symmetry, and the isotropic form is the
recommended operator for general edge detection.}
\label{fig:radial}
\end{figure*}

\subsection{Implementation}
The operator can be evaluated in two steps.  Algorithm~\ref{alg:sd} gives
a concise description; any classical derivative kernel can be used in
line~\ref{ln:deriv}.

\begin{algorithm}[H]
\caption{Isotropic relative-gradient edge-strength map (implements
Eq.~\eqref{eq:Esd}, not the radial score $L$ of Eq.~\eqref{eq:L2}).}
\label{alg:sd}
\begin{algorithmic}[1]
\REQUIRE Greyscale image $I>0$, derivative kernel $K$
\ENSURE Edge-strength map $E_{\mathrm{SD}}$
\STATE Compute $\partial_x I = K * I$ and $\partial_y I = K^T * I$
        \label{ln:deriv}
\STATE $E_{\mathrm{SD}} \leftarrow \sqrt{(\partial_x I)^2+(\partial_y I)^2}\,/\,I$
\RETURN $E_{\mathrm{SD}}$
\end{algorithmic}
\end{algorithm}

Step~\ref{ln:deriv} is a convolution; step~2 is pointwise.  The whole
operator therefore has the same computational complexity as a classical
gradient magnitude plus one division per pixel.  The two-step procedure estimates
Eq.~\eqref{eq:Esd} with the same finite-difference kernel as a classical
gradient magnitude: the kernel provides an estimate of $\nabla I$, and the
pointwise division by $I$ yields $\|\nabla I\|/I$.  The only approximation
inherited from classical methods is the choice of finite-difference kernel.

The two factors of Eq.~\eqref{eq:Esd} can also be discretised in the opposite
order.  Since $\|\nabla I\|/I=\|\nabla\log I\|$ in the continuum, one may instead
apply the kernel to the logarithmic image and take $\|K*\log I\|$.  The two
estimates agree away from a step but differ at one: the first weights the
intensity difference by the value at a single pixel, so at a step from $I_1$ to
$I_2$ it gives $|I_2-I_1|/I_1$ on one side and $|I_2-I_1|/I_2$ on the other,
whereas the second gives the symmetric logarithmic difference
$|\log(I_2/I_1)|$ on both sides.  The two-step form is the default because it
reuses the classical kernels directly; Section~\ref{sec:discretisation} compares
the two.

A practical caveat is that real images may contain zero-valued pixels.  A
small offset $\varepsilon$ can be added to the denominator, giving
$E_{\varepsilon}=\|\nabla I\|/(I+\varepsilon)$.  This breaks exact intensity
invariance in the darkest regions but avoids numerical instability.

\section{A parametric family}
\label{sec:parametric}
The absolute gradient and the log-gradient are two extreme points of a single
parametric family:
\begin{equation}
E_p(x,y)=\frac{\|\nabla I(x,y)\|}{I(x,y)^p},
\qquad p\in[0,\infty).
\label{eq:parametric}
\end{equation}
For $p=0$ we recover the classical absolute gradient magnitude.  For $p=1$ we
recover the superderivative/log-gradient magnitude.  Values $p\in(0,1)$ give
an intermediate behaviour: the response is partially normalised by the local
intensity without becoming as aggressive as the full log-gradient in dark
regions.

\begin{proposition}[Intensity invariance of the family]
$E_p$ is invariant under $I\mapsto cI$ if and only if $p=1$.
\end{proposition}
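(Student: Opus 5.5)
The plan is a direct computation of how $E_p$ transforms under $I\mapsto cI$, followed by a check of both directions of the equivalence. The gradient is linear, so $\nabla(cI)=c\,\nabla I$ and $\|\nabla(cI)\|=c\,\|\nabla I\|$ for $c>0$. The denominator transforms as $(cI)^p=c^pI^p$, which is well defined because $I>0$ and $c>0$. Hence
\begin{equation*}
E_p[cI](x,y)=\frac{c\,\|\nabla I(x,y)\|}{c^p\,I(x,y)^p}=c^{1-p}\,E_p[I](x,y),
\end{equation*}
so the family is homogeneous of degree $1-p$ in the intensity.

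The ``if'' direction is then immediate. At $p=1$ the factor $c^{1-p}$ equals $1$ for every $c>0$, which recovers the Intensity invariance proposition already proved for $E_{\mathrm{SD}}=E_1$.

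For the ``only if'' direction, suppose $p\neq 1$. Invariance must hold for every positive image $I$ and every $c>0$, so it is enough to exhibit one image and one $c$ where it fails. Take any positive image with a point where $\nabla I\neq0$; a linear ramp $I(x,y)=1+x$ on a domain where it stays positive will do. There $E_p[I]>0$. Choose, say, $c=2$. Then $c^{1-p}=2^{1-p}\neq1$, because $t\mapsto 2^{t}$ is injective and $1-p\neq0$, and therefore $E_p[cI]\neq E_p[I]$ at that point.

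The only point needing care is the exact meaning of ``invariant''. The statement must be read as holding for all positive images, since a constant image gives $E_p\equiv0$ for every $p$ and so is trivially invariant. The argument therefore rests on the existence of an image with nonvanishing gradient, and it must treat $p$ over the full range $[0,\infty)$, including $p=0$ (the classical gradient, which scales by $c$) and $p>1$ (which scales by $c^{1-p}<1$ when $c>1$). No other obstacle is expected.
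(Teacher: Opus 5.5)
Your proof is correct and matches the paper's argument: both use $\|\nabla(cI)\|=c\|\nabla I\|$ and $(cI)^p=c^pI^p$ to get $E_p(cI)=c^{1-p}E_p(I)$, and then observe that the factor is identically $1$ exactly when $p=1$. Your explicit witness for the ``only if'' direction (an image with $\nabla I\neq0$ somewhere, together with $c=2$) spells out what the paper's one-line ``if and only if'' leaves implicit: a constant image, or the choice $c=1$, would not distinguish between exponents.
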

\begin{proof}
$\|\nabla(cI)\|=c\|\nabla I\|$ and $(cI)^p=c^p I^p$.  Therefore
$E_p(cI)=c^{1-p}\|\nabla I\|/I^p=c^{1-p}E_p(I)$, which equals $E_p(I)$ if and
only if $p=1$.
\end{proof}

\begin{remark}[Gamma transforms]
\label{rem:gamma}
Since $\nabla\log(I^{\gamma})=\gamma\nabla\log I$, a gamma transform multiplies
$E_{\mathrm{SD}}$ by the constant $\gamma$ and leaves the normalised edge map
unchanged, while for $p\neq1$ it gives the position-dependent factor
$\gamma\,I^{(\gamma-1)(1-p)}$.  The $p=1$ operator is thus covariant under
gamma, not merely scale-invariant --- one more property pulling against the
experiments, which prefer a small interior exponent ($p=0.25$).
\end{remark}

Thus the full log-gradient ($p=1$) is the unique intensity-invariant member of
the family; members with $p<1$ trade some invariance for numerical stability.

The above observation is important in practice: exact intensity invariance
requires $p=1$, but the BSDS500 experiments below show that the best empirical
performance is obtained with a small interior exponent ($p=0.25$).
The slight loss of invariance is compensated by better stability in dark regions.

\subsection{Regularisation and robust alternatives}
\label{sec:regularised}
The raw log-gradient can become numerically unstable or visually faint when
$I$ is small or when the relative change is very small.  Several regularised
variants keep the same relative-change semantics while improving robustness:

\begin{itemize}[leftmargin=1.4em]
  \item \textbf{Additive regularisation:}
    $E_{\alpha,p}=\|\nabla I\|/(I^p+\alpha)$ with $\alpha>0$.  In bright
    regions this behaves like $E_p$; in dark regions it behaves like an
    absolute gradient scaled by $1/\alpha$, which avoids division-by-zero
    and prevents the map from collapsing to black.
  \item \textbf{Local-mean denominator:} replace $I^p$ by
    $(\mu_\rho*I)^p$, where the locally averaged intensity $\mu_\rho*I$
    is computed with a box/Gaussian filter of radius
    $\rho$ and then raised to the same exponent.  This makes the denominator robust to shot noise while retaining
    the relative-change interpretation at scales larger than $\rho$.
\end{itemize}

Figure~\ref{fig:formulations} compares these variants on natural images.

\begin{figure*}[!t]
\centering
\includegraphics[width=\textwidth]{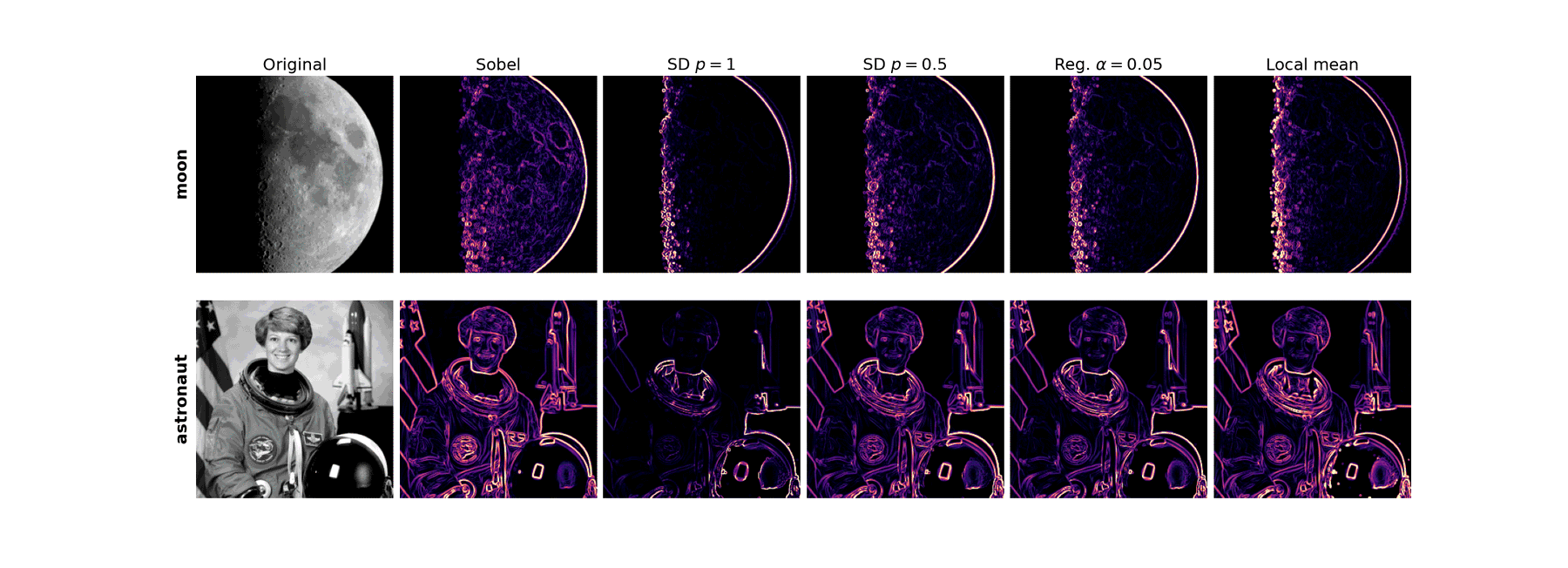}
\caption{Alternative superderivative formulations, one row per image and one
column per formulation: original, Sobel, the log-gradient $p=1$, the
parametric $p=0.5$, the additively regularised $\alpha=0.05$ and the
local-mean denominator.  The lunar image is included because its dark sky is
where an unregularised division by $I$ is worst behaved, which is what the
regularised and local-mean variants are for.  Maps are scaled to their 99th
percentile, so a single saturated pixel in the dark sky cannot blacken the
rest of the panel.}
\label{fig:formulations}
\end{figure*}

\section{Experiments}
\label{sec:experiments}
\subsection{Setup}
We generated five synthetic $256\times256$ images with known ground-truth
edges:
\begin{itemize}[leftmargin=1.4em]
  \item \textbf{Circles:} concentric annuli with five boundaries.
  \item \textbf{Square:} a bright square on a dark background.
  \item \textbf{Checkerboard:} an $8\times8$ checkerboard.
  \item \textbf{Triangle:} a bright triangle on a dark background.
  \item \textbf{Ramp:} a linear intensity ramp with an additive step.
\end{itemize}
The annulus boundaries, the triangle boundary, and the square boundary are
computed analytically from the generating mask, so the ground truth does not depend
on any detector; the checkerboard ground truth is the grid of internal boundaries of
the generating mask, and the ramp ground truth is the column of the additive step.
For each image we computed the edge-strength maps of the proposed operator,
the radial score $|L|/r$, and the classical operators Sobel, Scharr, Prewitt,
Roberts and Canny.  The proposed and radial operators use the Sobel kernel
throughout, while each classical operator uses its own kernel, so that the parametric
comparison is an ablation against the same kernel for the proposed family; the input is
clipped to $[10^{-3},1]$, each map is scaled to $[0,1]$ by its minimum and
maximum, thresholded at 23 levels linearly spaced in $[0.02,0.90]$, and thinned to a
one-pixel skeleton (scikit-image \texttt{skeletonize}) before matching.  The best F1 over the threshold sweep was
computed with a one-pixel localisation tolerance, as in the
BSDS benchmark (Section~\ref{sec:bsds}).  Here $F_1=2PR/(P+R)$ is the harmonic
mean of precision $P$ and recall $R$ at a given threshold, and the ``best'' is
taken over the threshold sweep; a one-pixel localisation tolerance credits a
predicted edge that falls within one pixel of the ground truth, instead of
requiring exact coincidence.  A strict pixel-exact rule is not used
to rank detectors, because it charges a one-pixel offset as both a false positive
and a false negative.

\subsection{Quantitative results}
Table~\ref{tab:results} reports F1 scores.  On these piecewise-constant
images most first-derivative methods, including the proposed one, recover the
edges almost perfectly.  The main exceptions are Roberts on the triangle, whose
$2\times2$ response misses part of the boundary ($0.871$), and the radial
score, which fails on the ramp, whose gradual slope is tangential to the radial
direction over most of the image, and loses on the checkerboard, whose axis-aligned
internal boundaries are mostly perpendicular to the radial direction away from the
centre.  A strict pixel-exact rule
would charge the one-pixel offset of detectors such as Roberts, and of Canny on
the ramp, as both a false positive and a false negative, an artefact of the
metric rather than a property of the operators; we therefore report only the
tolerance-based rule throughout.  Because these piecewise-constant images cannot
separate the operators, Section~\ref{sec:relative} isolates the regime where the
relative and absolute weights must differ.

\begin{table*}[!t]
\centering
\footnotesize
\caption{Best F1 on synthetic ground-truth edges under a threshold sweep with
one-pixel localisation tolerance (Section~\ref{sec:bsds}).
Highest score per image is in \textbf{bold}.  Most first-derivative methods are
within about $0.02$ of each other; the exceptions are Roberts on the triangle and
the radial score on the ramp and checkerboard, where their respective definitions
are a genuine handicap.}
\label{tab:results}
\setlength{\tabcolsep}{2pt}
\begin{tabular}{@{}lccccccc@{}}
\toprule
Image & $E_{0.25}$ (Sobel) & $|L|/r$ & Sobel & Scharr & Prewitt & Roberts & Canny \\
\midrule
Circles      & 0.986 & 0.985 & 0.986 & \textbf{0.988} & 0.984 & 0.979 & 0.980 \\
Square       & \textbf{1.000} & \textbf{1.000} & 0.998 & 0.998 & 0.998 & 0.998 & 0.996 \\
Checkerboard & \textbf{1.000} & 0.942 & 0.998 & 0.998 & 0.998 & \textbf{1.000} & 0.998 \\
Triangle     & \textbf{1.000} & \textbf{1.000} & \textbf{1.000} & \textbf{1.000} & \textbf{1.000} & 0.871 & 0.998 \\
Ramp         & 0.996 & 0.188 & 0.996 & 0.996 & 0.996 & \textbf{1.000} & 0.996 \\
\bottomrule
\end{tabular}%

\end{table*}

\subsection{Visual comparison}
Figure~\ref{fig:edges} shows the edge-strength maps.  The superderivative
operator gives thin, clean responses at intensity discontinuities, similar
to Sobel and Scharr.  Because it normalises by the local intensity, its
response is more uniform across objects of different brightness.

\begin{figure*}[!t]
\centering
\includegraphics[width=\textwidth]{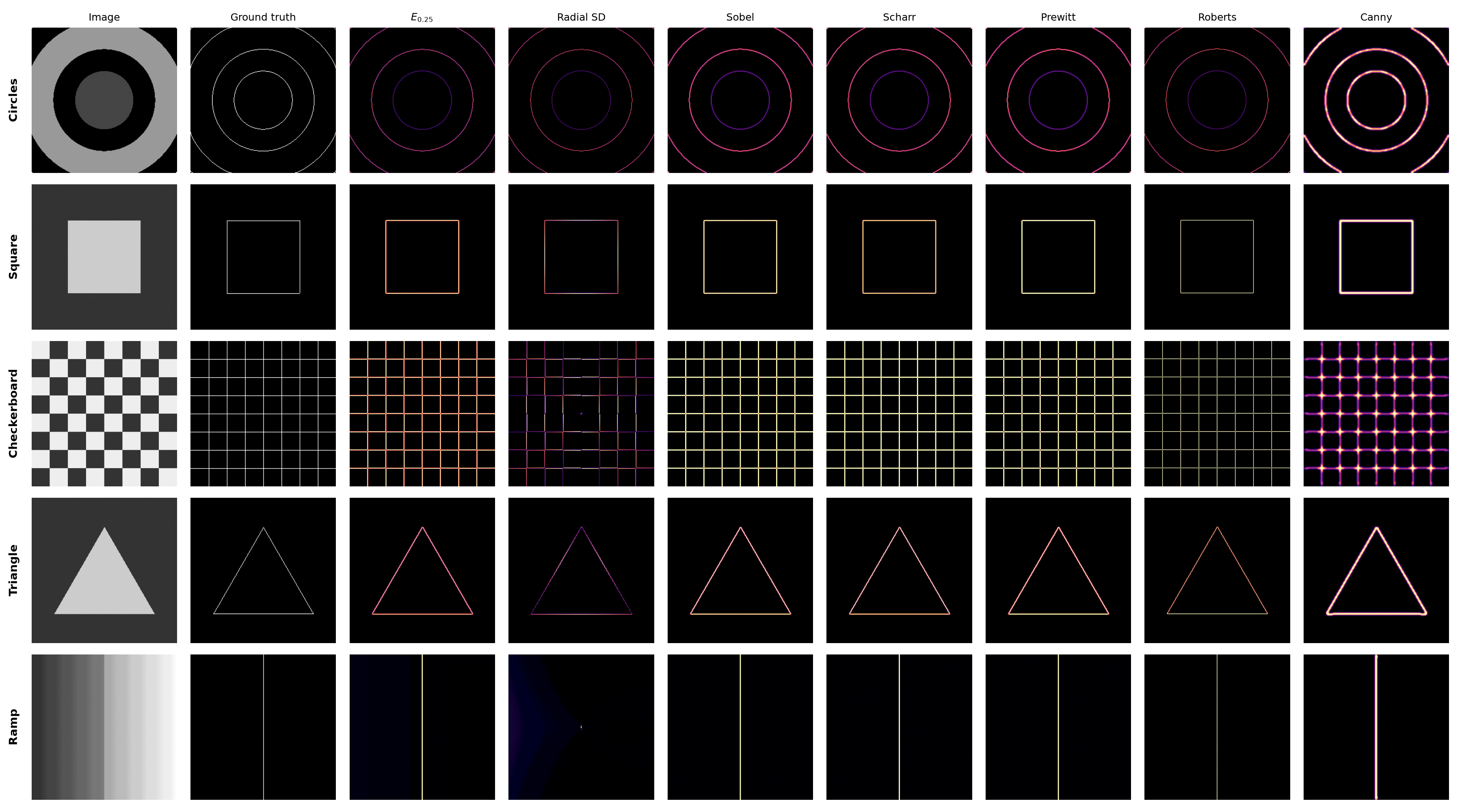}
\caption{Edge-strength maps for synthetic images.  Rows correspond to
concentric annuli, square, checkerboard, triangle and intensity ramp; columns
show the original image, ground truth, and the outputs of the superderivative
operator ($p=0.25$), the radial score, Sobel, Scharr, Prewitt, Roberts and Canny.
The radial score on the ramp collapses to a single spot because the gradual
slope contributes only a weak, coordinate-weighted response, while the additive
step produces a strong localised peak that dominates the normalised colour
scale.}
\label{fig:edges}
\end{figure*}

\subsection{Intensity-scale invariance}
To verify the main theoretical property experimentally, we took a synthetic
piecewise-constant image and multiplied it by $c\in\{0.3,1.0,3.0\}$.  The
Sobel edge-strength map scales linearly with $c$ ($\|\nabla(cI)\|=c\|\nabla
I\|$), whereas the superderivative map remains unchanged because the factor
$c$ cancels in $\|\nabla(cI)\|/(cI)$.  Figure~\ref{fig:invariance} shows the
maps on a common absolute scale: the Sobel maps become brighter as $c$
increases, while the superderivative maps are visually identical.

\begin{figure}[!t]
\centering
\includegraphics[width=\columnwidth]{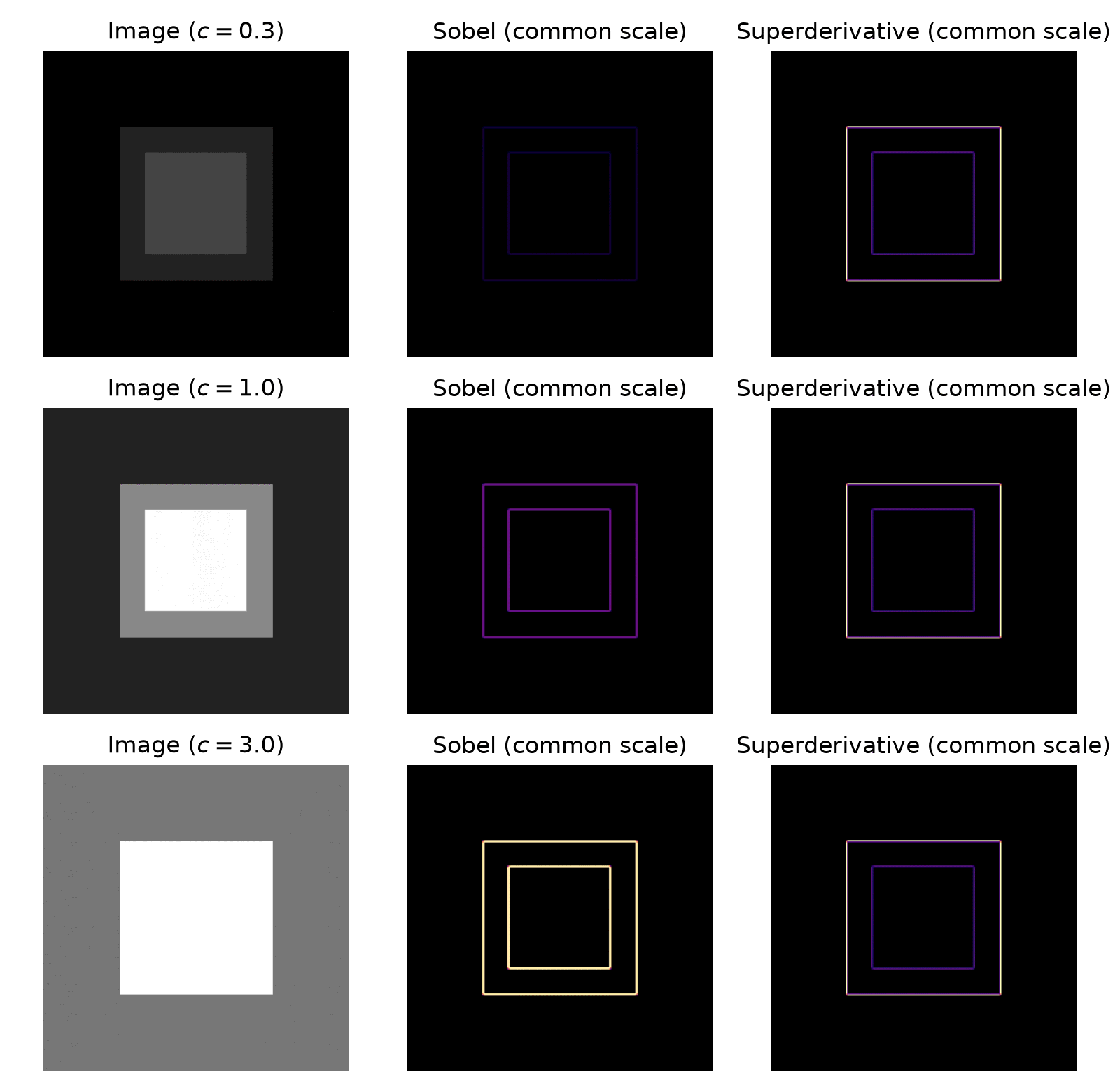}
\caption{Intensity-scale invariance.  Rows correspond to global scalings
$c=0.3,1.0,3.0$ of the same image.  Both edge maps are displayed on a common
absolute scale, without per-image normalisation.  Sobel's absolute response
grows linearly with $c$; the superderivative absolute response remains
unchanged (up to numerical round-off).  The displayed image for $c=3$ is
saturated, but the operators process the unclipped values.}
\label{fig:invariance}
\end{figure}

\subsection{Natural images}
Figure~\ref{fig:natural} shows edge-strength maps for six standard natural
images.  The superderivative operator produces responses comparable to Sobel
and Scharr, with slightly cleaner behaviour in regions of very different
brightness because of the intensity normalisation.

\begin{figure*}[!t]
\centering
\includegraphics[width=\textwidth]{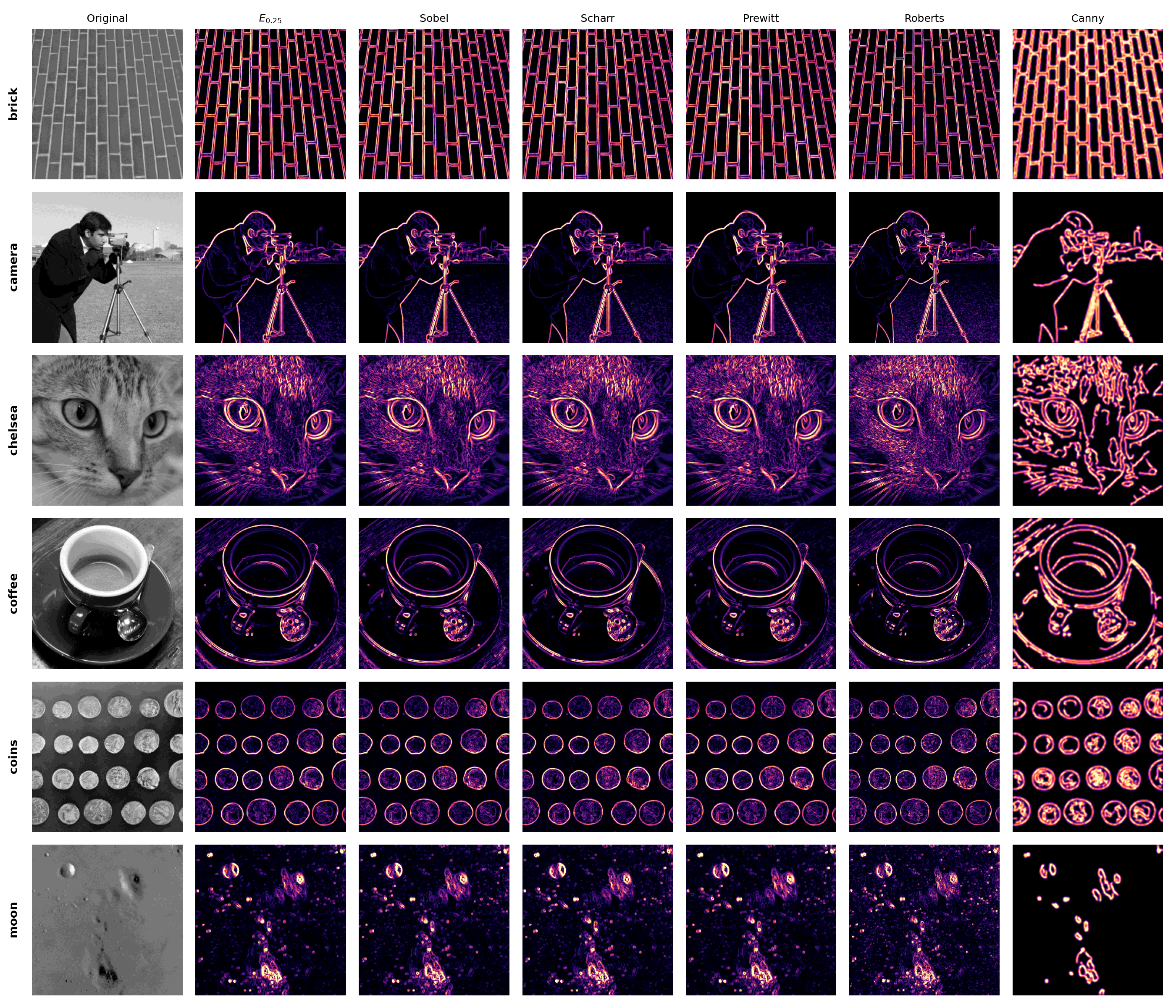}
\caption{Edge-strength maps for natural images (brick, camera, chelsea,
coffee, coins, moon).  Columns show the original image and the outputs of the
superderivative operator ($p=0.25$), Sobel, Scharr, Prewitt, Roberts and Canny.}
\label{fig:natural}
\end{figure*}

\subsection{BSDS500 benchmark}
\label{sec:bsds}
\begin{figure}[!t]
\centering
\includegraphics[width=\columnwidth]{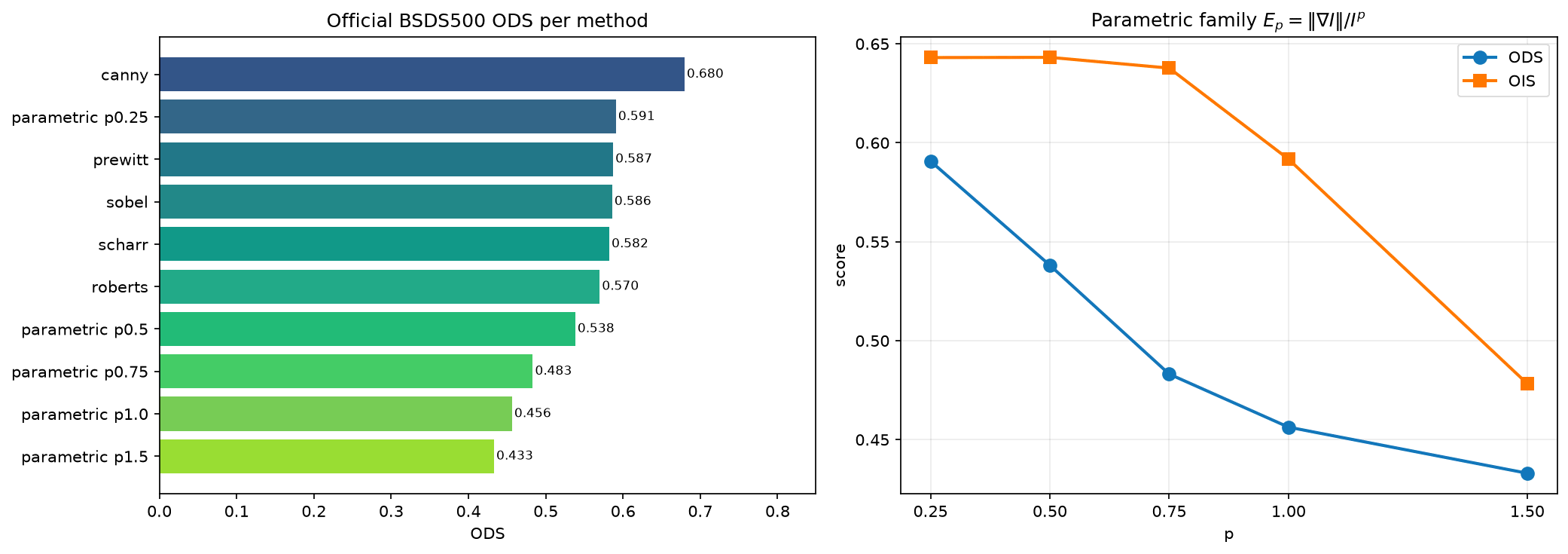}
\caption{Official BSDS500 results on the validation split.  Left: ODS for
each method.  Right: ODS and OIS as a function of the parametric exponent $p$.}
\label{fig:bsds}
\end{figure}

We evaluated the detectors on the 100 images of the BSDS500 validation split
with the official boundary benchmark of Arbel\'aez \emph{et al.}~\cite{arbelaez}.
The matching code is the original Berkeley C++ implementation (the same
\texttt{correspondPixels} used by the released benchmark), so the pixel
correspondence is the optimal, minimum-cost assignment of the Berkeley CSA
solver rather than an approximation.  Following the official defaults, each of the
five (or six) human boundary maps is used separately, the prediction is
thresholded at the 99 levels $\{0.01,\dots,0.99\}$ and thinned to a one-pixel
skeleton (scikit-image \texttt{skeletonize}), and the thresholded map is matched to
each human map within $0.0075$ of
the image diagonal with an outlier cost of $100$.  We report the optimal dataset
scale (ODS, the best F-measure obtained with a single threshold shared by all images),
the optimal image scale (OIS, the best F-measure obtained with the threshold tuned for each
image) and the average precision (AP).  ODS and OIS take the maximum over the
fixed 99-threshold grid rather than interpolating between thresholds, and AP follows
the official recall interpolation.  The
detectors are our own implementations (finite-difference kernels and the Canny of
\textsc{scikit-image}, with the automatic thresholds of that library and tuned per
image over $\sigma\in\{0.5,1,1.5,2,2.5\}$), and the input is clipped to
$[10^{-3},1]$ to keep the relative operators finite in dark regions.  Each edge
map is scaled to $[0,1]$ before thresholding, by its minimum and maximum in the
main table and by its 99th percentile in the robustness check of
Section~\ref{sec:discussion}.

Table~\ref{tab:bsds} reports the result.  The tuned Canny is the best
method in ODS and OIS ($0.679$ and $0.674$), and its ODS margin over the best
member of the parametric family is large ($+0.088$) and significant on a paired
$t$-test over the per-image OIS ($+0.031$, $p\approx10^{-6}$).  Canny's
$\sigma$ is tuned per image here, but its advantage does not come from that
tuning: with a single global $\sigma$ chosen by ODS the same detector still
reaches ODS $0.658$ and OIS $0.653$, far ahead of the unsmoothed family, so the
comparison is not an oracle.  The Canny maps of \textsc{scikit-image} are binary,
so their threshold sweep is degenerate and their ODS (pooled counts) can exceed
their OIS (mean per-image best), because the two aggregate the per-image
F-measures differently; the ordering against the continuous maps is unaffected.  Among the first-derivative kernels the parametric family peaks
at $p=0.25$ on the validation split, where it has the
best ODS ($0.591$) and AP ($0.651$) of the family and beats the raw Sobel,
Prewitt and Scharr kernels by $0.004$--$0.009$ in ODS; its per-image OIS advantage
is small but significant ($+0.005$ to $+0.006$, bootstrap 95\% CI $[+0.002,+0.009]$,
$p<0.01$), and $p=0.5$ is statistically indistinguishable
from $p=0.25$ in OIS ($-0.0001$, 95\% CI $[-0.004,+0.005]$, $p=0.96$) but has
a lower ODS ($0.538$).
Table~\ref{tab:bsds_test} repeats the comparison on the 200 held-out test
images, which were not used to choose $p$.  The per-image OIS and AP advantages
replicate ($+0.004$ to $+0.006$ in OIS, bootstrap 95\% CI $[+0.002,+0.008]$,
$p<10^{-4}$; $+0.003$ to $+0.004$ in AP), but the single-threshold ODS margin
over Sobel and Prewitt does not: on the held-out split $p=0.25$ ($0.588$) is
statistically tied with Sobel ($0.589$) and marginally below Prewitt ($0.591$),
while still above Scharr ($0.586$).  The interior optimum is therefore robust in the
per-image metric, but the headline ODS gain over the kernels is confined to the
validation split; we state both and do not claim the ODS margin as a held-out result.
The full superderivative $p=1$ is \emph{worse} than the classical kernels under
min-max, by $0.13$ in ODS and $0.046$ in OIS (bootstrap 95\% CI
$[-0.073,-0.021]$, $p\approx10^{-3}$), because dividing by
the local intensity amplifies noise in dark regions, and $p=1.5$ is worse still.
The interior optimum is robust to the normalisation.  Table~\ref{tab:bsds_robust}
repeats the benchmark with a robust percentile normalisation: the family still
peaks at a small interior exponent ($p=0.5$, ODS $0.617$, with $p=0.75$ and
$p=0.25$ at $0.613$ and $0.611$) and still beats the kernels
($0.600$--$0.601$), so the optimum is not an artefact of the min-max threshold
grid.  The collapse of the fully relative endpoint, however, is: $p=1$ is below the
kernels under min-max but statistically indistinguishable from them under percentile
(per-image OIS difference from Sobel $-0.002$, 95\% CI $[-0.015,+0.010]$,
$p=0.73$), so its ranking is a threshold-grid effect of the min-max scaling rather
than a property of the relative gradient.  The two discretisations are likewise comparable under
percentile (log-image Sobel $0.604$ against two-step Sobel $0.601$).
Two baselines that the construction invites confirm the picture.  Discretising the
log-gradient on the logarithmic image, $\|K*\log I\|$, gives ODS $0.578$ and AP
$0.626$, close to the two-step Sobel kernel, so the two-step form is not
disadvantaged by its discretisation.  The additively regularised variant
($p=0.5$, $\alpha=0.05$) recovers ODS $0.574$ with OIS $0.644$ and AP $0.649$,
above the full $p=1$ and competitive with the classical kernels, which is what the
regularised denominator is for.  Finally, the comparison with Canny is not
like-for-like: Canny adds non-maximum suppression and hysteresis to Gaussian
smoothing, whereas the family here is only a thresholded edge-strength map.  To
separate the effect of smoothing from that of the full pipeline, we apply the same
Gaussian pre-smoothing to the classical kernels and to the family, with a single
$\sigma$ chosen by ODS for each (as for the single-$\sigma$ Canny above).  This
changes the picture: $E_{0.25}$ at $\sigma=2.0$ reaches ODS $0.659$, OIS $0.684$
and AP $0.713$, against ODS $0.658$, OIS $0.653$ and AP $0.652$ for the
single-$\sigma$ Canny and ODS $0.586$, OIS $0.637$ and AP $0.647$ for unsmoothed
Sobel.  The smoothed family therefore matches the single-$\sigma$ Canny in ODS and
leads it in OIS and AP, while the tuned Canny remains ahead in ODS.  Smoothed
Sobel at $\sigma=2.0$ (ODS $0.650$, OIS $0.679$, AP $0.707$) shows that most of
the gain over the unsmoothed kernels comes from the smoothing rather than the
normalisation, the latter adding a small margin on top.  This does not isolate the
contribution of non-maximum suppression and hysteresis, which the family does not
implement.  The value of the construction is therefore the family rather than its
endpoint: a small interior intensity normalisation helps by a small but detectable
margin, the optimum is interior, and a tuned Canny remains ahead of the unsmoothed
kernels.

\begin{table}[!t]
\centering
\footnotesize
\caption{Official BSDS500 boundary benchmark on the 100 validation images
(ODS/OIS/AP; five human maps used separately; 99 thresholds; thresholded
prediction thinned to one pixel; optimal-assignment matching within $0.0075$ of
the image diagonal; outlier cost $100$; edge maps scaled to $[0,1]$ by their
minimum and maximum).  Our own detector implementations, so the values are not
directly comparable to published results for other detectors.  Higher is better;
best in \textbf{bold}.}
\label{tab:bsds}
\begin{tabular}{@{}lccc@{}}
\toprule
Method & ODS & OIS & AP \\
\midrule
Canny (tuned $\sigma$)     & \textbf{0.679} & \textbf{0.674} & 0.613 \\
Canny (single $\sigma$)     & 0.658 & 0.653 & \textbf{0.652} \\
Parametric $p=0.25$         & 0.591 & 0.643 & 0.651 \\
Prewitt                     & 0.587 & 0.638 & 0.647 \\
Sobel                       & 0.586 & 0.637 & 0.647 \\
Scharr                      & 0.582 & 0.637 & 0.645 \\
Log-image Sobel             & 0.578 & 0.633 & 0.626 \\
Regularised $p=0.5$          & 0.574 & 0.644 & 0.649 \\
Roberts                     & 0.570 & 0.629 & 0.631 \\
Parametric $p=0.5$           & 0.538 & 0.643 & 0.647 \\
Parametric $p=0.75$          & 0.483 & 0.638 & 0.641 \\
Superderivative $p=1$       & 0.456 & 0.592 & 0.644 \\
Parametric $p=1.5$           & 0.433 & 0.478 & 0.644 \\
\bottomrule
\end{tabular}
\end{table}

\begin{table}[!t]
\centering
\footnotesize
\caption{Effect of Gaussian pre-smoothing on the same BSDS500 validation split
(min-max normalisation).  A single $\sigma$ is chosen by ODS for each method, as for
the single-$\sigma$ Canny.  The unsmoothed kernels are compared against Canny, but
once both are smoothed the family matches the single-$\sigma$ Canny in ODS and leads
it in OIS and AP, while the tuned Canny remains ahead in ODS.
Higher is better; best in \textbf{bold}.}
\label{tab:bsds_smooth}
\begin{tabular}{@{}lccc@{}}
\toprule
Method & ODS & OIS & AP \\
\midrule
$E_{0.25}$, $\sigma=2.0$    & \textbf{0.659} & \textbf{0.684} & \textbf{0.713} \\
Canny (single $\sigma$)     & 0.658 & 0.653 & 0.652 \\
Sobel, $\sigma=2.0$         & 0.650 & 0.679 & 0.707 \\
$E_{0.25}$, $\sigma=1.0$    & 0.631 & 0.667 & 0.686 \\
Sobel, $\sigma=1.0$         & 0.620 & 0.662 & 0.680 \\
$E_{0.25}$, $\sigma=0.5$    & 0.609 & 0.650 & 0.662 \\
Sobel, $\sigma=0.5$          & 0.593 & 0.644 & 0.657 \\
$E_{0.25}$ (no smoothing)   & 0.591 & 0.643 & 0.651 \\
Sobel (no smoothing)        & 0.586 & 0.637 & 0.647 \\
\bottomrule
\end{tabular}
\end{table}

\begin{table}[!t]
\centering
\footnotesize
\caption{Same official BSDS500 benchmark as Table~\ref{tab:bsds} but with each
edge map scaled to $[0,1]$ by its 99th percentile instead of its minimum and
maximum.  The interior optimum of the family and its margin over the kernels
survive the change; the min-max collapse of the larger exponents does not.
Higher is better; best in \textbf{bold}.}
\label{tab:bsds_robust}
\begin{tabular}{@{}lccc@{}}
\toprule
Method & ODS & OIS & AP \\
\midrule
Canny (tuned $\sigma$)     & \textbf{0.679} & \textbf{0.673} & 0.611 \\
Canny (single $\sigma$)     & 0.653 & 0.627 & 0.619 \\
Parametric $p=0.5$           & 0.617 & 0.648 & 0.631 \\
Regularised $p=0.5$          & 0.617 & 0.648 & 0.632 \\
Parametric $p=0.75$          & 0.613 & 0.645 & 0.629 \\
Parametric $p=0.25$         & 0.611 & 0.647 & \textbf{0.638} \\
Log-image Sobel             & 0.604 & 0.637 & 0.614 \\
Parametric $p=1$             & 0.601 & 0.639 & 0.620 \\
Sobel                       & 0.601 & 0.641 & 0.633 \\
Prewitt                     & 0.601 & 0.643 & 0.633 \\
Scharr                      & 0.600 & 0.641 & 0.632 \\
Roberts                     & 0.595 & 0.634 & 0.617 \\
Parametric $p=1.5$           & 0.573 & 0.620 & 0.599 \\
\bottomrule
\end{tabular}
\end{table}

\begin{table*}[!t]
\centering
\footnotesize
\caption{Same protocol as Table~\ref{tab:bsds} on the 100 validation and the 200
held-out test images (min-max normalisation).  The exponent $p=0.25$ was chosen on
validation only.  The per-image OIS and AP advantages over the kernels replicate on
the held-out split; the ODS margin over Sobel and Prewitt does not.  Our own detector
implementations; higher is better; best in \textbf{bold}.}
\label{tab:bsds_test}
\setlength{\tabcolsep}{3pt}
\begin{tabular}{@{}lcccccc@{}}
\toprule
& \multicolumn{3}{c}{Validation (100)} & \multicolumn{3}{c}{Test (200)} \\
\cmidrule(lr){2-4}\cmidrule(lr){5-7}
Method & ODS & OIS & AP & ODS & OIS & AP \\
\midrule
Canny (tuned $\sigma$)   & \textbf{0.679} & \textbf{0.674} & 0.613 & \textbf{0.681} & \textbf{0.674} & 0.617 \\
Parametric $p=0.25$       & 0.591 & 0.643 & \textbf{0.651} & 0.588 & 0.637 & \textbf{0.646} \\
Prewitt                   & 0.587 & 0.638 & 0.647 & 0.591 & 0.633 & 0.642 \\
Sobel                     & 0.586 & 0.637 & 0.647 & 0.589 & 0.633 & 0.642 \\
Scharr                    & 0.582 & 0.637 & 0.645 & 0.586 & 0.631 & 0.641 \\
Superderivative $p=1$      & 0.456 & 0.592 & 0.644 & 0.458 & 0.599 & 0.645 \\
\bottomrule
\end{tabular}
\end{table*}

Figure~\ref{fig:bsds} summarises the results under min-max.  The parametric
family peaks at a small interior exponent under both normalisations: $p=0.25$ under
min-max and $p=0.5$ under percentile, with $p=0.75$ and $p=0.25$ close behind.
The peak is shallow in OIS, and the drop of the large exponents under min-max is
the statistically solid effect.

\subsection{Outlook: mathematical alternatives and extensions}
The framework invites several refinements without introducing post-processing;
none of them is evaluated on BSDS500 here, and they are listed as outlook rather
than results.

\textbf{Parametric exponent.} The family $E_p=\|\nabla I\|/I^p$ lets the
user choose the degree of intensity normalisation.  The BSDS500 experiments
suggest $p\in[0.25,0.5]$ as a good compromise between invariance and
stability.

\textbf{Regularised denominator.} Replacing $I$ by $I+\varepsilon$ avoids
overflow in dark regions.  The resulting map
$E_{\varepsilon}=\|\nabla I\|/(I+\varepsilon)$ tends to the log-gradient when
$\varepsilon\ll I$ and to a scaled absolute gradient, $\|\nabla I\|/\varepsilon$,
when $\varepsilon\gg I$; it is a one-parameter family distinct from $E_p$, and the
two can be combined as $\|\nabla I\|/(I^p+\alpha)$.  The additively regularised
variant is included in the BSDS500 comparison above.

\textbf{Radial score for symmetric objects.} For images with approximate
radial symmetry, the score $L=r\,\partial_r\log I$ or its normalised version
$|L|/r$ can be used directly.  It measures relative change under a zoom and is
therefore naturally scale-covariant after recentring: under a spatial
rescaling it is reparametrised, not multiplied by a factor, so its value at
corresponding points is unchanged.

\textbf{Higher-order operators.} Just as the Laplacian extends the gradient,
one can take the divergence of the relative-gradient field or apply the
multiplicative derivative twice, giving a relative-curvature operator.  A
systematic treatment is left for future work; the expression
$\partial_x(x\partial_x\log I)+\partial_y(y\partial_y\log I)$ used here is only
a candidate and is not the second bigeometric derivative.

\textbf{Scale-space.} All kernels in Table~\ref{tab:kernels} can be applied
after Gaussian smoothing at scale $\sigma$.  A multi-scale superderivative
operator would compute $E_p$ at several scales and combine the responses.

\subsection{Parametric family visualisation}
Figure~\ref{fig:parametric} shows the parametric edge-strength maps
$E_p=\|\nabla I\|/I^p$ for several values of $p$.  As $p$ increases, the
response becomes more uniform across regions of different brightness, but the
division by smaller intensities also amplifies noise in dark areas.

\begin{figure*}[!t]
\centering
\includegraphics[width=\textwidth]{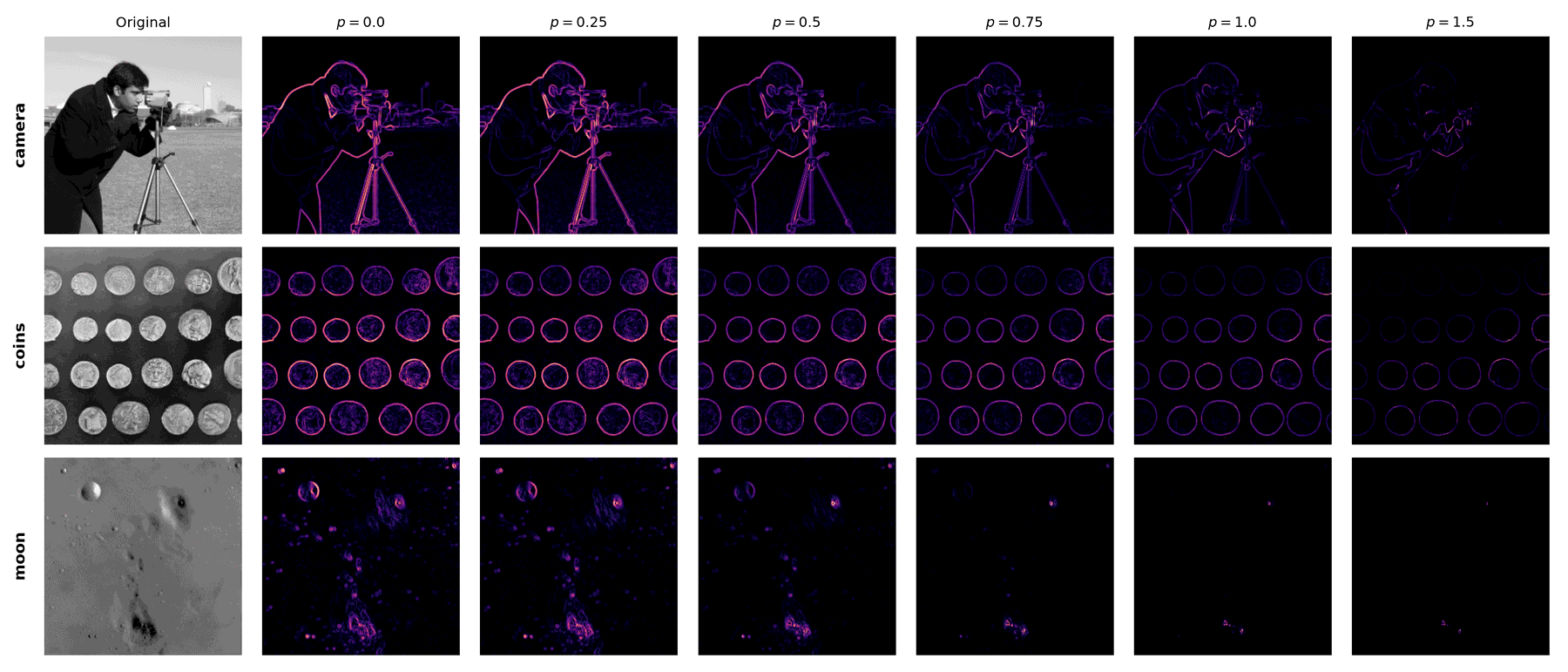}
\caption{Parametric family $E_p=\|\nabla I\|/I^p$ for $p=0,0.25,0.5,0.75,1,1.5$
on natural images.  $p=0$ is the absolute gradient; $p=1$ is the
superderivative/log-gradient.}
\label{fig:parametric}
\end{figure*}

\subsection{Noise robustness}
Figure~\ref{fig:noise} compares the detectors as additive Gaussian noise
increases.  Because the relative operator divides by the local intensity, additive
noise is amplified where $I$ is small, so its advantage over the classical
kernels at low noise reverses at high noise: measured as best F1 against the
synthetic ground truth of Fig.~\ref{fig:edges}, at $\sigma=0.02$ the relative
$p=0.5$ map is marginally ahead of Sobel ($1.000$ vs.\ $0.998$), at
$\sigma=0.05$ they are comparable ($1.000$ vs.\ $0.998$), and at
$\sigma=0.10$ the relative map collapses ($0.265$ for $p=0.5$, $0.038$ for
$p=1$) while Sobel holds at $0.992$.  This is the same dark-region
amplification that limits $p=1$ on BSDS500, and it is what the regularised
denominator of Section~\ref{sec:regularised} is for; the low-light experiment
below makes the point in the spatial domain.

\begin{figure*}[!t]
\centering
\includegraphics[width=\textwidth]{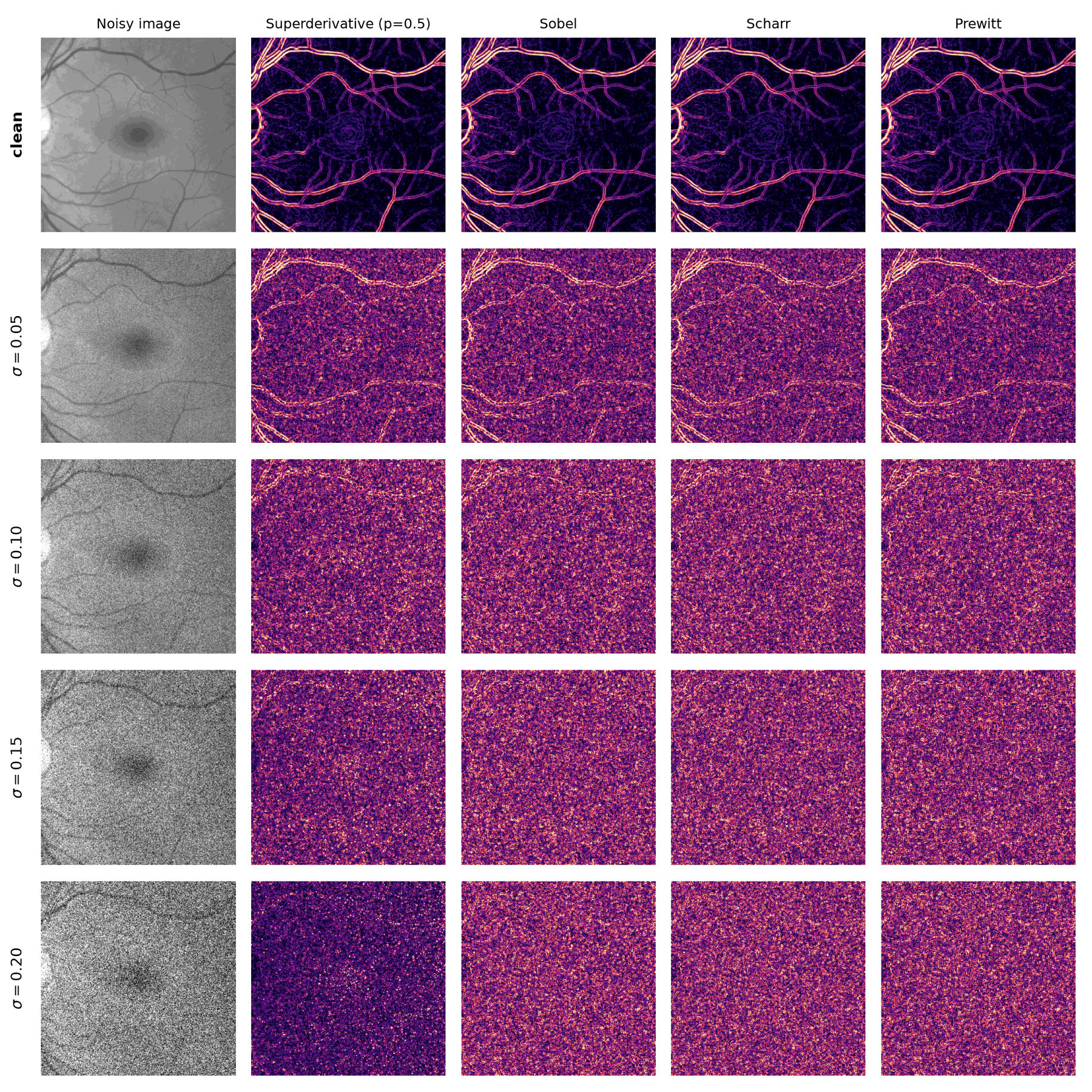}
\caption{Robustness to additive Gaussian noise.  Rows correspond to increasing
noise standard deviation; columns show the noisy input and the outputs of the
superderivative ($p=0.5$), Sobel, Scharr and Prewitt operators.}
\label{fig:noise}
\end{figure*}

\subsection{Low-light and texture}
Figures \ref{fig:lowlight} and \ref{fig:texture} show two additional regimes:
images under simulated low-light conditions and highly textured natural
surfaces.  Under simulated low light the proposed operator degrades less than the
classical kernel in the map-difference sense: the mean absolute difference between
the normal-light and low-light maps, relative to the normal-light mean, is $0.249$ at
$p=0.5$ and $0.235$ for the regularised variant, against $0.321$ for Sobel,
while the fully relative $p=1$ is the worst at $0.603$.  This measure is a
self-consistency check, not an accuracy metric, and low light here is a gamma
compression plus additive noise rather than the global rescaling of the invariance
proposition.  On textured images the operator likewise suppresses weak texture
gradients through the same normalisation, at the cost of some faint true edges.

\begin{figure*}[!t]
\centering
\includegraphics[width=\textwidth]{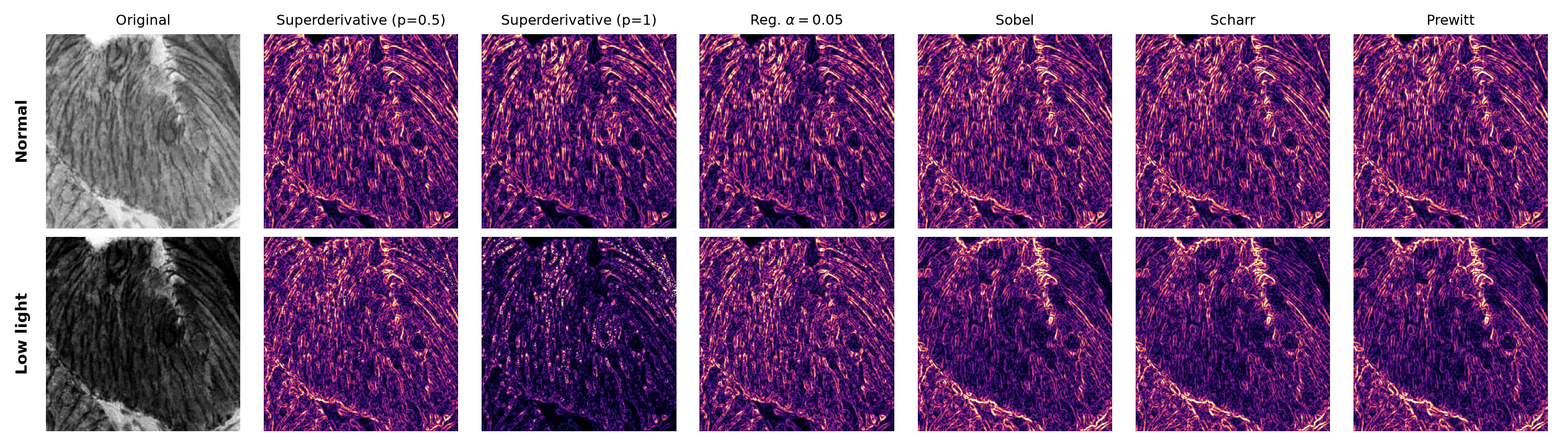}
\caption{Edge detection under normal (top row) and simulated low-light
(bottom row) conditions.  Low light is a gamma compression $I^{2.5}$ followed by
additive sensor noise $\sigma=0.02$, \emph{not} the global rescaling
$I\mapsto cI$ of the invariance proposition, under which the maps are identical
to machine precision (Fig.~\ref{fig:invariance}).  What degrades the response is
the noise, not the loss of light.  Figures are means over ten noise realisations
with a fixed seed (42); the spread is $\pm0.002$ for $p=0.5$ and the regularised
variant and $\pm0.021$ for $p=1$.}
\label{fig:lowlight}
\end{figure*}

\begin{figure*}[!t]
\centering
\includegraphics[width=\textwidth]{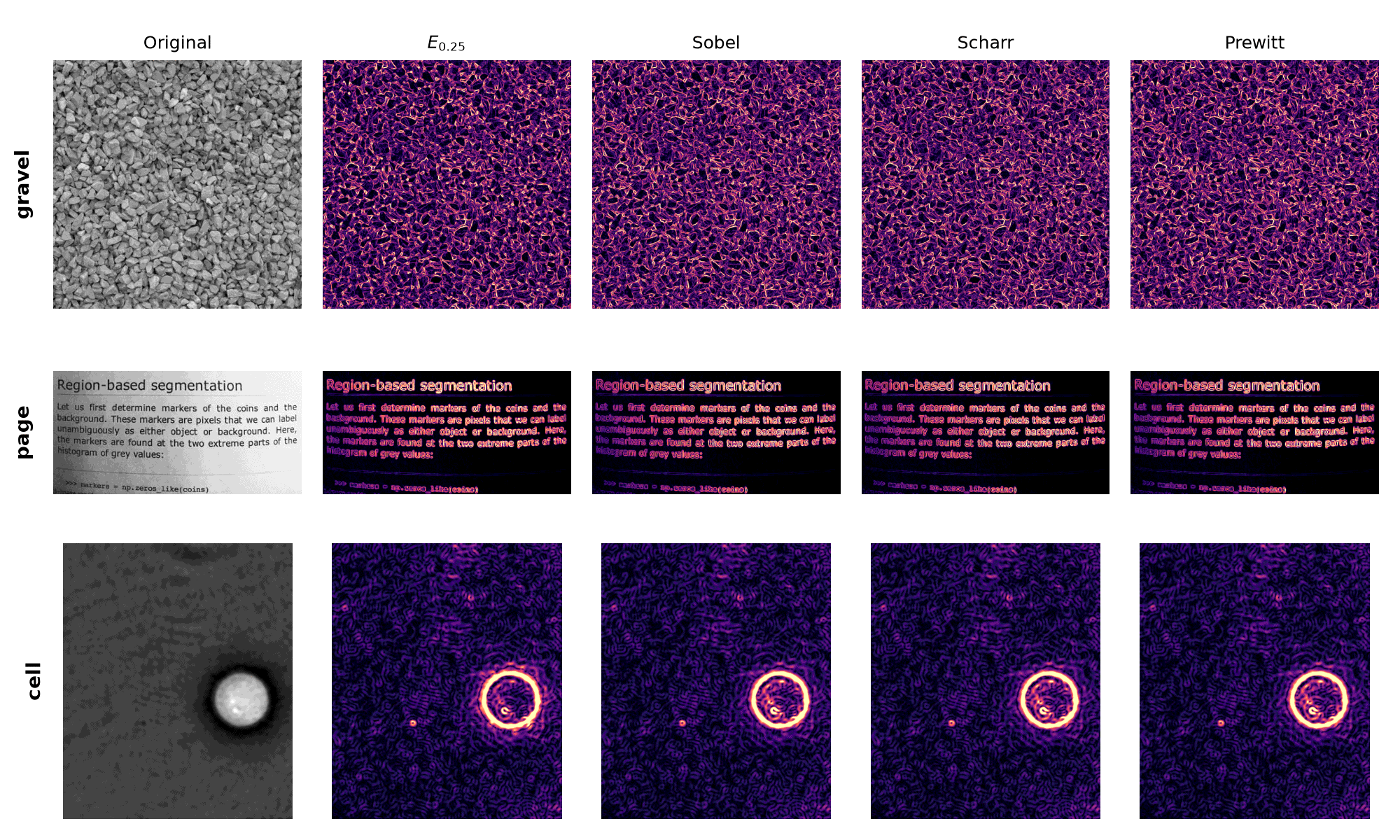}
\caption{Edge detection on textured images (gravel, page, cell).  The
superderivative column uses $p=0.25$.}
\label{fig:texture}
\end{figure*}

\subsection{Controlled checks of relative versus absolute change}
\label{sec:relative}
Because the global rescaling of the invariance proposition cancels under any
per-image normalisation, the invariance is only observable without normalisation
(Fig.~\ref{fig:invariance}).  Its practical content is that equal \emph{contrast
ratios} $\Delta I/I$ receive equal response.  Figure~\ref{fig:relative} and the
numbers below isolate that content on two analytic images.  On two steps of the same
absolute size $0.05$ but on backgrounds $0.10$ and $0.50$, the ratio of the
bright-step to dark-step peak response is $1.000$ for Sobel and $0.273$ for
$E_{\mathrm{SD}}$; on two steps of the same contrast ratio $0.20$ but on the same
two backgrounds, it is $5.000$ for Sobel and $1.000$ for $E_{\mathrm{SD}}$.  On a
reflectance image with faint edges, $I=R\cdot L$, multiplied by a smooth
illumination field $L$, the best F1 against the edges of $R$ is $1.000$ for
$E_{\mathrm{SD}}$ ($p=1$) against $0.998$ for Sobel, and $E_{\mathrm{SD}}$ is
already at the level reached by Sobel after exact division by $L$.  The gain is small
for a smooth field, as expected, and grows with the dynamic range of the
illumination; it is this contrast equalisation, not the global rescaling of
Fig.~\ref{fig:invariance}, that the relative normalisation buys.

\begin{figure*}[!t]
\centering
\includegraphics[width=\textwidth]{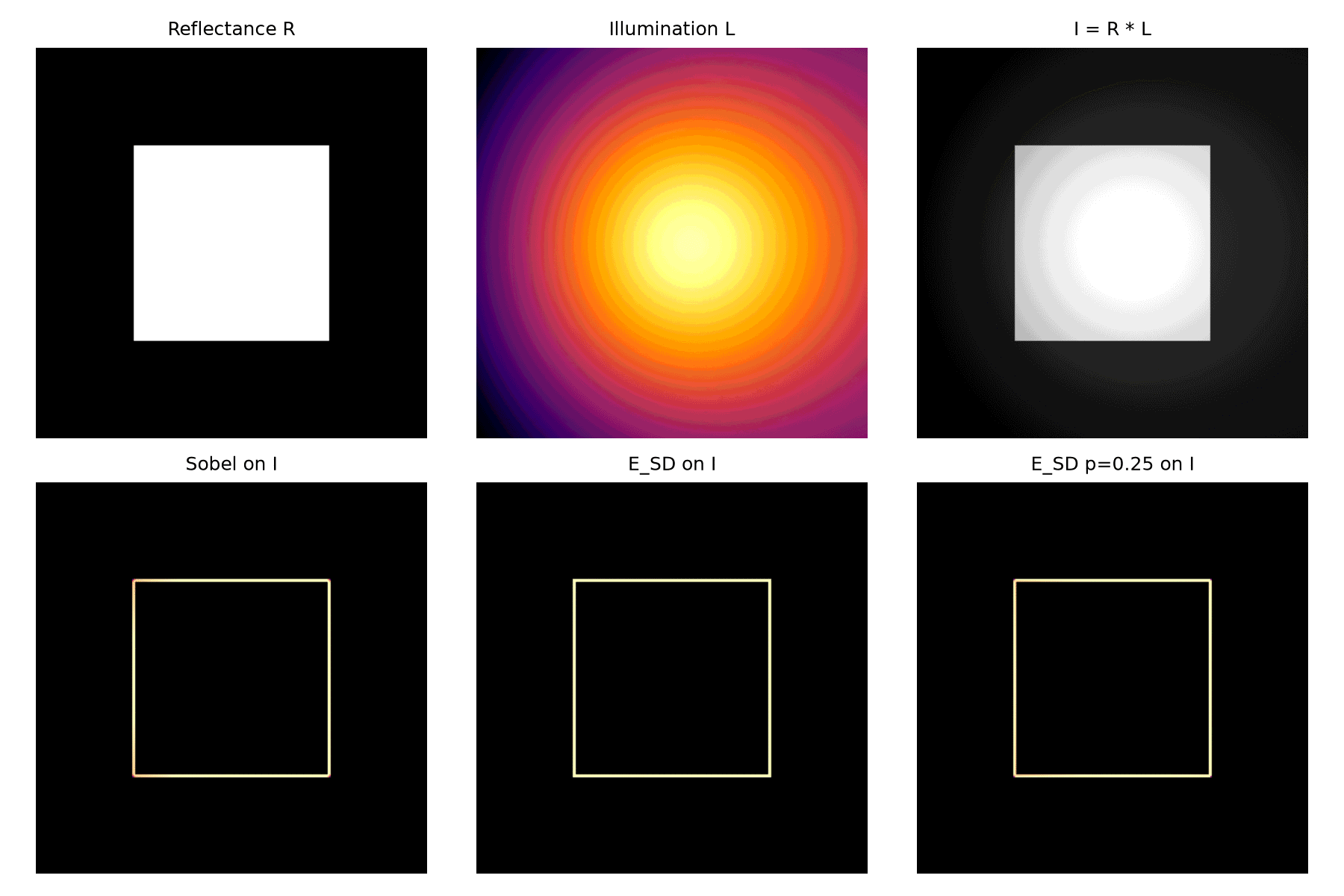}
\caption{Controlled checks of the relative change on analytic images.  Top row:
reflectance $R$, smooth illumination field $L$ and the image $I=R\cdot L$; bottom
row: Sobel, $E_{\mathrm{SD}}$ ($p=1$) and $E_{0.25}$ on $I$, all scaled to
their 99th percentile.  Sobel also responds to the illumination gradient,
whereas the relative operators are close to the reflectance edges.}
\label{fig:relative}
\end{figure*}

\subsection{Discretisation of the relative gradient}
\label{sec:discretisation}
The two discretisations of $\|\nabla\log I\|$ agree away from a step but not at
one.  On the synthetic piecewise-constant images of Fig.~\ref{fig:edges}, the
two-step form $\|K*I\|/I$ reaches best F1 $1.000$ and the log-image form
$\|K*\log I\|$ reaches $0.998$: the two are within the tolerance of the
benchmark, so the two-step form used throughout is a sound discretisation and the
small difference is not what drives the $p=1$ behaviour on BSDS500, which is a
dark-region effect rather than a discretisation artefact.

\section{Discussion}
\label{sec:discussion}
On the official BSDS500 validation split a small interior exponent
($p\approx0.25$--$0.5$) gives a small but statistically detectable improvement
over the classical first-derivative kernels under both a min-max and a robust
percentile normalisation of the edge maps, so the interior optimum is not an artefact
of the threshold grid.  On the held-out test split the per-image OIS and AP advantages
replicate, while the single-threshold ODS margin over Sobel and Prewitt does not
(Table~\ref{tab:bsds_test}), so the positive claim is stated per-image and the ODS
margin is confined to the validation split.  The fully relative operator ($p=1$) is never the best member
and its ranking is normalisation-sensitive: division by a small $I$ amplifies
dark-region noise, and the min-max scaling of a heavy-tailed map turns that noise
into a threshold-grid effect that the percentile normalisation removes.  On the
synthetic piecewise-constant geometry most first-derivative methods recover the
edges almost equally well, with Roberts on the triangle as the exception.  The
conceptual advantage of the
framework---making the absolute-to-relative passage explicit---and the verified
intensity-scale invariance at $p=1$ (Fig.~\ref{fig:invariance}) remain, but they do
not by themselves select the empirically best member of the family.

Its main limitation is the division by $I$.  In very dark regions a small
amount of noise is amplified because the denominator is close to zero.  In
practice this is mitigated by adding a small offset $\varepsilon$ to the
denominator, at the cost of a slight loss of exact invariance in the darkest
areas.  A second limitation is that the full radial score $L$ is not
shift-invariant, because the weights $x$ and $y$ are measured from the image
centre.  The isotropic version $E_{\mathrm{SD}}$ restores shift invariance
and is the recommended form for general edge detection.  A third, methodological
limitation is that neither the unsmoothed nor the merely smoothed first-derivative
kernels are like-for-like comparisons with Canny, which also applies non-maximum
suppression and hysteresis.  Table~\ref{tab:bsds_smooth} shows that once the same
Gaussian smoothing is applied, the family matches the single-$\sigma$ Canny in ODS and
leads it in OIS and AP, but the tuned Canny remains ahead in ODS; the comparison
therefore separates the effect of smoothing from that of the full Canny pipeline only
partially.

Future work includes adding non-maximum suppression and hysteresis
thresholding (as in Canny), mathematical robustifications of the general
formulation and its discretisation, multi-scale extensions, and applying the
operator to image enhancement, segmentation and feature detection.

\section{Conclusions}
\label{sec:conclusions}
We have revisited the log-gradient edge operator and motivated it by analogy with
the superderivative, a multiplicative derivative in which the independent
variable is perturbed multiplicatively.  The operator measures relative rather
than absolute intensity change, is invariant under global intensity rescaling,
and can be implemented in two simple steps.  A parametric family interpolates
between the absolute gradient and the log-gradient; on synthetic images most
first-derivative methods recover the edges almost perfectly, with Roberts on the
triangle as the exception, while on the official
BSDS500 benchmark the best members are small intermediate exponents
($p\approx0.25$--$0.5$), edging the classical kernels by a small but detectable
margin in the per-image metric; the single-threshold ODS margin over the kernels holds
on the validation split but not on the held-out test split, and the fully relative
endpoint $p=1$ is never the best member, its ranking depending on the edge-map
normalisation.  With the same Gaussian smoothing that
the single-$\sigma$ Canny uses, the smoothed family matches it in ODS and leads it
in OIS and AP, while the tuned Canny remains ahead in ODS; the unsmoothed
comparison therefore understates the smoothed family, though it does not isolate the
effect of non-maximum suppression and hysteresis.  The contribution is therefore
the construction, the invariance it buys, and a clear account of when it helps --- a
small gain at $p\approx0.25$ and a loss at $p=1$ --- not
a new state of the art in boundary detection.  The natural next step is to embed
$E_p$ in a full edge pipeline with smoothing, non-maximum suppression and
hysteresis thresholding and to re-evaluate whether the same interior exponent is
preferred.

\section*{Data and code availability}
The BSDS500 dataset is publicly available from the Berkeley Computer Vision
Group.  The reference implementation, the synthetic evaluation, and the official
BSDS500 boundary evaluation (\texttt{src/official\_bsds/}), which links the
unmodified Berkeley matching code, together with the frozen results reported in this
paper, are available at
\url{https://github.com/agarnung/computervisionmiscellaneous} (directory
\texttt{papers/superderivative}).  The frozen result files
\path{src/synthetic_results.json},
\path{src/bsds_results_official_code.json} (validation split),
\path{src/bsds_results_official_code_test.json} (held-out test split) and
\path{src/bsds_results_official_code_percentile.json} (percentile
normalisation), which the evaluation and the synthetic benchmark regenerate
deterministically.


\end{document}

%% file: paper_style.tex
\usepackage{float}
\usepackage{placeins}
\usepackage{caption}

%% file: main.bbl
\begin{thebibliography}{10}
\bibitem{fechner}
G.~T.~Fechner, \emph{Elemente der Psychophysik}. Leipzig, Germany:
Breitkopf und H\"artel, 1860.  (Fechner's logarithmic law; the Weber
contrast $\Delta I/I$ is named after E.~H.~Weber.)

\bibitem{reinhard}
E.~Reinhard \emph{et al.}, ``Photographic tone reproduction for digital images,''
\emph{ACM Trans. Graph.}, vol.~21, no.~3, pp.~267--276, 2002.

\bibitem{land1971}
E.~H.~Land and J.~J.~McCann, ``Lightness and retinex theory,''
\emph{J. Opt. Soc. Amer.}, vol.~61, no.~1, pp.~1--11, 1971.

\bibitem{land1977}
E.~H.~Land, ``The retinex theory of color vision,''
\emph{Sci. Amer.}, vol.~237, no.~6, pp.~108--128, 1977.

\bibitem{jobson1997}
D.~J.~Jobson, Z.~Rahman, and G.~A. Woodell, ``A multiscale retinex for
bridging the gap between color images and the human observation of scenes,''
\emph{IEEE Trans. Image Process.}, vol.~6, no.~7, pp.~965--976, 1997.

\bibitem{kimmel2003}
R.~Kimmel, M.~Elad, D.~Shaked, R.~Keshet, and I.~Sobel, ``A variational
framework for retinex,'' \emph{Int. J. Comput. Vis.}, vol.~52, no.~1,
pp.~7--23, 2003.

\bibitem{homomorphic}
A.~V. Oppenheim, R.~W. Schafer, and T.~G. Stockham, ``Nonlinear filtering of
multiplied and convolved signals,'' \emph{Proc. IEEE}, vol.~56, no.~8,
pp.~1264--1291, 1968.

\bibitem{grossman}
M.~Grossman and R.~Katz, \emph{Non-Newtonian Calculus}. Pigeon Cove, MA:
Lee Press, 1972.

\bibitem{zotkin}
A.~Zotkin, ``Supercomputaci\'on tetr\'adica: primera aproximaci\'on hacia una
teor\'ia de la super-relatividad,'' \emph{TARD\'IGRADOS} (blog), Apr.~2018.
[Online]. Available:
\url{https://tardigrados.wordpress.com/2018/04/22/supercomputacion-tetradica-primera-aproximacion-hacia-una-teoria-de-la-super-relativididad/}.
This is the source of the term \emph{superderivative}; the note is not
peer-reviewed, and the same operator appears as the bigeometric derivative in
multiplicative calculus~\cite{grossman}.

\bibitem{jourlin}
M.~Jourlin, \emph{Logarithmic Image Processing: Theory and
Applications}, Advances in Imaging and Electron Physics, vol.~195. Amsterdam,
The Netherlands: Elsevier, 2016.

\bibitem{florack}
L.~Florack and H.~van~Assen, ``Multiplicative calculus in biomedical image
analysis,'' \emph{J. Math. Imaging Vis.}, vol.~42, no.~1, pp.~64--75, 2012.

\bibitem{kovesi}
P.~Kovesi, ``Phase congruency: A low-level image invariant,''
\emph{Psychological Research}, vol.~64, no.~2, pp.~136--148, 2000.

\bibitem{touzi}
R.~Touzi, A.~Lopes, and P.~Bousquet, ``A statistical and geometrical edge
detector for SAR images,'' \emph{IEEE Trans. Geosci. Remote Sens.}, vol.~26,
no.~6, pp.~764--773, 1988.

\bibitem{gonzales}
R.~C. Gonzalez and R.~E. Woods, \emph{Digital Image Processing}, 4th~ed. Pearson, 2018.
\bibitem{roberts}
L.~G.~Roberts, ``Machine perception of three-dimensional solids,''
Ph.D. dissertation, Massachusetts Institute of Technology, 1963.

\bibitem{prewitt}
J.~M.~S.~Prewitt, ``Object enhancement and extraction,'' in \emph{Picture
Processing and Psychopictorics}, B.~Lipkin and A.~Rosenfeld, Eds.  New York:
Academic Press, 1970, pp.~75--149.

\bibitem{sobel}
I.~Sobel and G.~Feldman, ``A $3\times3$ isotropic gradient operator for image
processing,'' presented at the Stanford Artificial Intelligence Project, 1968.

\bibitem{freichen}
W.~Frei and C.-C.~Chen, ``Fast boundary detection: A generalization and a new
algorithm,'' \emph{IEEE Trans. Comput.}, vol.~C-26, no.~10, pp.~988--998, 1977.

\bibitem{scharr}
H.~Scharr, ``Optimal operators in digital image processing,''
Ph.D. dissertation, Universit{\"a}t Heidelberg, 2000.

\bibitem{kirsch}
R.~Kirsch, ``Computer determination of the constituent structure of
biological images,'' \emph{Comput. Biomed. Res.}, vol.~4, no.~3, pp.~315--328,
1971.

\bibitem{robinson}
G.~S.~Robinson, ``Color edge detection,''
\emph{Proc. SPIE}, vol.~87, pp.~126--133, 1976.

\bibitem{farid}
H.~Farid and E.~P.~Simoncelli, ``Differentiation of discrete
multidimensional signals,''
\emph{IEEE Trans. Image Process.}, vol.~13, no.~4, pp.~496--508, 2004.

\bibitem{canny}
J.~Canny, ``A computational approach to edge detection,''
\emph{IEEE Trans. Pattern Anal. Mach. Intell.}, vol.~PAMI-8, no.~6, pp.~679--698, 1986.

\bibitem{farid1997}
H.~Farid and E.~P.~Simoncelli, ``Optimally rotation-equivariant directional
derivative kernels,'' in \emph{Computer Analysis of Images and Patterns (CAIP)},
LNCS, 1997, pp.~207--214.

\bibitem{drago2003}
F.~Drago, K.~Myszkowski, T.~Annen, and N.~Chiba, ``Adaptive logarithmic
mapping for displaying high contrast scenes,'' \emph{Computer Graphics Forum},
vol.~22, no.~3, pp.~419--426, 2003.

\bibitem{bashirov2008}
A.~E.~Bashirov, E.~M.~Kurp{\i}nar, and A.~\"Ozyap{\i}c{\i}, ``Multiplicative
calculus and its applications,'' \emph{J. Math. Anal. Appl.}, vol.~337, no.~1,
pp.~36--48, 2008.

\bibitem{lindeberg1994}
T.~Lindeberg, ``Scale-space theory: A basic tool for analyzing structures at
different scales,'' \emph{J. Appl. Stat.}, vol.~21, no.~1--2, pp.~225--270, 1994.

\bibitem{arbelaez}
P.~Arbel\'aez, M.~Maire, C.~Fowlkes, and J.~Malik, ``Contour detection and
hierarchical image segmentation,'' \emph{IEEE Trans. Pattern Anal. Mach.
Intell.}, vol.~33, no.~5, pp.~898--916, 2011.

\bibitem{samko1993}
S.~G. Samko, A.~A. Kilbas, and O.~I. Marichev,
\emph{Fractional Integrals and Derivatives: Theory and Applications}.
Yverdon, Switzerland: Gordon and Breach, 1993.

\bibitem{kilbas2006}
A.~A. Kilbas, H.~M. Srivastava, and J.~J. Trujillo,
\emph{Theory and Applications of Fractional Differential Equations}.
Amsterdam, The Netherlands: Elsevier, 2006.

\bibitem{miller1993}
K.~S. Miller and B.~Ross,
\emph{An Introduction to the Fractional Calculus and Fractional Differential
Equations}.
New York, NY: Wiley, 1993.

\end{thebibliography}
